\documentclass{LMCS}

\usepackage{latexsym}
\usepackage{amssymb}
\usepackage{bussproofs}
\usepackage{tikz}
\usepackage{enumerate,hyperref}

\usepackage{fancyvrb}
%\DefineVerbatimEnvironment{code}{Verbatim}{fontsize=\small}
\DefineVerbatimEnvironment{code}{Verbatim}{}
%\DefineVerbatimEnvironment{example}{Verbatim}{fontsize=\small}
\newcommand{\ignore}[1]{}

\input{sdmacros.sty}

\def\doi{7 (1:8) 2011}
\lmcsheading%
{\doi}
{1--24}
{}
{}
{Jun.~11, 2010}
{Mar.~23, 2011}
{}   

\begin{document} 

\title{From coinductive proofs to exact real arithmetic: 
theory and applications} 
\author{Ulrich Berger}
\address{Swansea University, Swansea SA2 8PP,
  Wales, UK}
\email{u.berger@swansea.ac.uk}

\begin{abstract}
Based on a new coinductive characterization of continuous
functions we extract certified programs for exact real number computation
from constructive proofs. The extracted programs construct and combine
exact real number algorithms with respect to the binary 
signed digit representation of real numbers.
The data type corresponding to the coinductive definition of 
continuous functions consists of finitely branching non-wellfounded
trees describing when the algorithm writes and reads digits.
We discuss several examples including the extraction of programs for
polynomials up to degree two and the definite integral
of continuous maps.
This is a revised and substantially extended version of the conference 
paper~\cite{BergerCSL09}.
%
%This is a pilot study in using proof-theoretic methods for %obtaining
%certified algorithms in exact real arithmetic.
\end{abstract}

\keywords{Proof theory, realizability, program extraction, 
coinduction, exact real number computation} 

\subjclass{??}

\maketitle

\section{Introduction}
\label{sec-intro}
Most of the recent work on exact real number
computation describes algorithms for functions on certain
exact representations of the reals (for example streams of
signed digits 
\cite{EscardoMarcial-Romero07,GeuversNiquiSpittersWiedijk07}
 or linear fractional transformations
\cite{EdalatHeckmann02}) and 
proves their correctness using a certain proof method 
(for example 
coinduction~\cite{CiaffaglioneGianantonio06,Bertot07,BergerHou07,Niqui08}). 
Our work has a similar aim, and builds on the work cited above, 
but there are two important differences.
The first is \emph{methodological}: we do not `guess' an algorithm
and then verify it, instead we \emph{extract} it from a proof, by
some (once and for all) proven correct method.
That this is possible in principle is well-known. Here we want to make the
case that it is also feasible, and that interesting and nontrivial
new algorithms can be obtained
(see also \cite{Schwichtenberg08,BergerSeisenberger05} for related work
on program extraction in constructive analysis and inductive definitions).
The second difference is \emph{algorithmic}: our method represents a
uniformly continuous real function not by a \emph{function} operating
on representations of reals, but by an infinite \emph{tree} that
contains information not only about the real function as a point map,
but also about its modulus of continuity.  Since the representing tree
is a pure data structure (without function component) a lazy
programming language, like Haskell, will memoize computations which
improves performance in certain situations.
%A similar representation of stream transformers 
%has been studied in~\cite{Ghanietal09}.

%We show how to extract from constructive proofs
%tree structures that represent algorithms for
%continuous real functions defined on a compact interval
%w.r.t.\ the signed digit representation of real numbers.

A crucial ingredient in the proofs (that we use for program extraction)
is a coinductive definition of the notion of uniform continuity  (u.\ c.).
Although, classically, continuity and uniform continuity
are equivalent for functions defined on a compact interval (we only consider
such functions), 
it is a suitable constructive
definition of \emph{uniform} continuity which matters for our purpose.
For convenience, we consider as domain and range of our functions
only the interval 
$\II : =[-1,1] = \set{x\in\RR\mid |x|\le 1} $
and, for the purpose of this introduction, only unary functions.
However, later we will also look at functions
of several variables where one has to deal with the non-trivial 
problem of choosing the input streams from which 
the next digit is consumed, a choice which can have
a big influence on the performance of the program.

We let $\SD := \set{-1,0,1}$
be the set of (binary) \emph{signed digits}.
By $\SDS$ we denote the set of all infinite streams 
$a = a_0 : a_1 :a_2 : \ldots$ of signed digits $a_i\in\SD$.
A signed digit stream $a\in \SDS$ represents the real number
\[\sdval{a} := \sum_{i\ge 0} a_i2^{-(i+1)}\in\II\]
A function $f : \II \to \II$ is \emph{represented} by a stream transformer
$\hat{f}:\SDS \to \SDS$ if $f\comp\sigma = \sigma\comp\hat{f}$.
%
%In Sect.~\ref{sec-coco} 
The coinductive definition of uniform continuity, given in 
Sect.~\ref{sec-realizability}, allows us to extract 
from a constructive proof of the u.\ c.\ of a function 
$f : \II \to\II$ an algorithm for a stream transformer $\hat{f}$ 
representing $f$. 
Furthermore, we show directly and constructively that the coinductive
notion of u.\ c.\ is closed under composition.
The extracted stream transformers are represented
by finitely branching non-wellfounded trees which, if executed in a lazy 
programming language, give rise to memoized algorithms.
These trees turn out to be closely related to the data structures
studied in \cite{Ghanietal09,Ghanietal09a}, and the
extracted program from the proof of closure under composition
is a generalization of the tree composing program defined there.

In Sect.~\ref{sec-ind-coind}, we briefly review inductive and coinductive
sets defined by monotone set operators. We give some simple examples,
among them a characterization of the real numbers in the
interval $\II$ by a coinductive predicate $\coco_0$. 
The method of program extraction from proofs involving induction and
coinduction is discussed informally, but in some detail, in 
Sect.~\ref{sec-realizability}. The earlier examples 
are continued and a program transforming fast Cauchy
representations into signed digit representations is extracted from
a coinductive proof. 
%
%Wellfounded induction and its particularly simple computational
%interpretation are discussed in Sect.~\ref{sec-wf}.
%
In Sect.~\ref{sec-coco}, the coinductive characterization $\coco_0$ of 
real numbers is generalized to nested coinductive/inductive predicates 
$\coco_n$ characterizing uniformly continuous real functions of $n$ 
arguments, and closure under composition is proven.
In Sect.~\ref{sec-digit}, we study wellfounded induction from the perspective
of program extraction and introduce the notion of a 
\emph{digital system} as a technical tool for showing that certain families
of functions are contained in $\coco_n$. The positive effect of memoization is
demonstrated by a case study on iterated logistic maps (which are special 
polynomials of degree 2). Furthermore, we prove that the predicates $\coco_n$
capture precisely uniform continuity. 
In Sect.~\ref{sec-integration} we extract a program for integration
from a proof that the definite integral on $\II$ of a function in 
$\coco_1$ can be approximated by rational numbers with any given precision.

% \paragraph{Notations.}
% By $\pow{X}$ we denote the powerset of a set $X$, 
% and by $X\to Y$ (sometimes also written $Y^X$) the set
% of all functions from $X$ to $Y$. 
% %
% %$X^n:= X\times\ldots\times X$ ($n$ times $X$).
% %
% $X\to Y\to Z$ is shorthand for $X\to(Y\to Z)$. We will sometimes
% write $X(x)$ instead of $x\in X$ and
% $f\colon X\to Y$ instead of $f\in X\to Y$. If
% $f\colon X\to Y$ and $Z$ is a set, then 
% $f[Z] := \set{f(x)\mid x\in X\cap Z}$.

The extracted programs are shown in the functional programming 
language Haskell. As Haskell's syntax is very close to the usual
mathematical notation for data and functions 
%(except that in Haskell one writes e.g.\ \verb|f :: Int -> Int| 
%instead of \verb|f : Int -> Int|) 
we hope that also readers not familiar with Haskell will 
be able to understand the code. 
The Haskell code shown in this paper is self contained and can be
obtained from the author on request. 
%
%The complete Haskell programs for all examples
%in this paper can be obtained from the author's web-page 
%(\verb|http://www.cs.swan.ac.uk/~csulrich/|).

\section{Induction and coinduction}
\label{sec-ind-coind}
We briefly discuss inductive and coinductive definitions
as least and greatest fixed points of monotone set operators
and the corresponding induction and coinduction principles.
%We also have a look at the related principle of wellfounded induction. 
%
% In the discussion below the partial order 
% $(\pow{U},\tm)$ could be replaced by any complete lattice.
%
The results in this section are standard and can be found in many
logic and computer science texts. For example 
in~\cite{BuFePoSi81} inductive definitions are proof-theoretically analysed,
and in \cite{BradfieldStirling07}
least and greatest fixed points are studied in the framework
of the modal mu-calculus.

An operator $\Phi\colon\pow{U}\to\pow{U}$ (where $U$ is an arbitrary 
``universal'' set and $\pow{U}$ is the powerset of $U$) 
is \emph{monotone} if for all $X,Y\tm U$
\begin{center}
if $X\tm Y$, then $\Phi(X)\tm\Phi(Y)$ 
\end{center}
A set $X\tm U$ is \emph{$\Phi$-closed} (or a pre-fixed point of $\Phi$)
if $\Phi(X)\tm X$.
Since $\pow{U}$ is a complete lattice, $\Phi$ has a least
fixed point $\lfp{\Phi}$ (Knaster-Tarski Theorem). 
For the sake of readability we will sometimes write 
$\lfpt{X}{\Phi(X)}$ instead of $\lfp{\Phi}$.
$\lfp{\Phi}$ can be defined as the least $\Phi$-closed subset of $U$.
Hence we have the \emph{closure principle} for $\lfp{\Phi}$, 
$\Phi(\lfp{\Phi}) \tm \lfp{\Phi}$
and the \emph{induction principle} stating that for every $X\tm U$,
if $\Phi(X) \tm X$, then $\lfp{\Phi} \tm X$.
It can easily be shown that $\lfp{\Phi}$ is even a \emph{fixed point} 
of $\Phi$, i.\ e.~$\Phi(\lfp{\Phi}) = \lfp{\Phi}$ (Lambek's Lemma). 
For monotone operators $\Phi,\Psi\colon\pow{U}\to\pow{U}$ 
we define
\[\Phi\tm \Psi \dequiv \all{X\tm U}\Phi(X)\tm\Psi(X)\]
It is easy to see that the operation \emph{$\lfp{}$ is monotone}, i.\ e.\ 
if $\Phi\tm\Psi$, then $\lfp{\Phi}\tm\lfp{\Psi}$.
%
%This follows by induction on $\lfp{\Phi}$, since
%$\Phi(\lfp{\Psi})\tm\Psi(\lfp{\Psi})\tm\lfp{\Psi}$. 
%The first inclusion holds since $\Phi\tm\Psi$, the
%second inclusion holds by the closure principle for $\lfp{\Psi}$.
%
Using monotonicity of $\lfp{}$ one can easily prove, by induction, a
principle, called
\emph{strong induction}. It says that,
if $\Phi(X\cap\lfp{\Phi}) \tm X$, then $\lfp{\Phi} \tm X$.
%
%To prove this principle, set $\Psi(X) := \Phi(X\cap\lfp{\Phi})$. This defines
%an operator $\Psi\colon\pow{U}\to\pow{U}$ which clearly is monotone.
%The assumption $\Phi(X\cap\lfp{\Phi}) \tm X$ means
%$\Psi(X)\tm X$ which, by induction on $\lfp{\Psi}$, entails
%$\lfp{\Psi} \tm X$.
%Hence, it suffices to show $\lfp{\Phi}\tm\lfp{\Psi}$.
%We prove this by induction on $\lfp{\Phi}$. First note, that
%since $\Phi$ is monotone, we have $\Psi\tm\Phi$, and therefore,
%the reverse inclusion, $\lfp{\Psi}\tm\lfp{\Phi}$,
%holds, by monotonicity of $\lfp{}$.
%According to the induction principle for $\Phi$, we have to show 
%$\Phi(\lfp{\Psi})\tm\lfp{\Psi}$. Since $\lfp{\Psi}\tm\lfp{\Phi}$,
%$\Phi$ is monotone and $\lfp{\Phi}$ is $\Phi$-closed we have 
%$\Phi(\lfp{\Psi})\tm\Phi(\lfp{\Phi})\tm \lfp{\Phi}$.

Dual to inductive definitions are \emph{coinductive definitions}.
%Let again $\Phi\colon\pow{U}\to\pow{U}$ be monotone. 
A subset $X$ of $U$ is called \emph{$\Phi$-coclosed} 
(or a post-fixed point of $\Phi$) if $X\tm\Phi(X)$.
By duality, $\Phi$ has a largest fixed point $\gfp{\Phi}$ which can be
defined as the largest $\Phi$-coclosed subset of $U$. Similarly,
all other principles for induction have their coinductive counterparts.
To summarise, we have the following principles:
\begin{center}
\begin{tabular}{lcl}
\emph{Fixed point} &\hbox{}\qquad\qquad\hbox{}&
$\Phi(\lfp{\Phi})=\lfp{\Phi}$ and 
$\Phi(\gfp{\Phi})=\gfp{\Phi}$.\\[0.3em]
\emph{Monotonicity} &&
if $\Phi\tm\Psi$, then $\lfp{\Phi}\tm\lfp{\Psi}$ and 
                       $\gfp{\Phi}\tm\gfp{\Psi}$.\\[0.3em]
\emph{Induction}&&
if $\Phi(X)\tm X$, then $\lfp{\Phi}\tm X$.\\[0.3em]
\emph{Strong induction} &&
if $\Phi(X\cap\lfp{\Phi})\tm X$, then $\lfp{\Phi}\tm X$.\\[0.3em]
\emph{Coinduction}&&
if $X\tm\Phi(X)$, then $X\tm\gfp{\Phi}$.\\[0.3em]
%
%Proving $X\tm\gfp{\Phi}$ 
%by \emph{coinduction on $\gfp{\Phi}$} means
%proving $X\tm\Phi(X)$ and using the coinduction principle.
%
\emph{Strong coinduction} &&
if $X\tm\Phi(X\cup\gfp{\Phi})$, then $X\tm\gfp{\Phi}$.
\end{tabular}
\end{center}

\begin{exa}[{\bf natural numbers}]
Define $\Phi : \pow{\RR}\to\pow{\RR}$ by 
\[
\Phi(X) := 
\set{0}\cup\set{y+1\mid y\in X} 
= \set{x\mid x=0\lor \ex{y\in X}(x = y+1)} 
\] 
Then $\lfp{\Phi}=\NN=\set{0,1,2,\ldots}$.
We consider this as the \emph{definition} of the natural numbers.
%(actually, this is the definition that was given in my first-year analysis
%course; at that time I found it strange, but now it turns out that
%it is exactly right for our purposes).
%
The induction principle is logically equivalent to the usual
zero-successor-induction on $\NN$: if $X(0)$ (base) and 
$\all{x}(X(x) \to X(x+1))$ (step),
then $\all x\in\NN\,X(x)$. Strong induction weakens the step
by restricting $x$ to the natural numbers:
$\all{x\in\NN}(X(x) \to X(x+1))$.
\end{exa}

\begin{exa}[{\bf signed digits and the interval {$[-1,1]$}}]
For every signed digit $d\in\SD$ we set
$\II_d := [d/2-1/2,d/2+1/2] = \set{x\in\RR\mid |x-d/2|\le1/2}$.
Note that $\II$ is the union of the $\II_d$
and every sub interval of $\II$ of length $\le 1/2$ is contained
in some $\II_d$.
We define an operator $\out_0 : \pow{\RR}\to\pow{\RR}$ by 
\[\out_0(X) := \set{x\mid \ex{d\in\SD} (x\in\II_d \land 2x-d\in X)}\] 
and set $\coco_0 := \gfp{\out_0}$.
Since clearly $\II \tm \out_0(\II)$, it follows, by coinduction, 
that $\II\tm\coco_0$.
On the other hand $\coco_0\tm\II$, by the fixed point property. 
Hence $\coco_0 = \II$. %So, what is the point of this definition?
The point of this definition is, that the proof of ``$\II \tm \out_0(\II)$'' 
has an interesting computational content: $x\in \II$ must be given in 
such a way that it is possible to find $d\in\SD$ such that $x\in\II_d$. 
This means that $d/2$ is a \emph{first approximation} of $x$. 
The computational content of the proof of ``$\II\tm\coco_0$'', 
roughly speaking, iterates the process of finding approximations 
to $x$ ad infinitum, i.\ e.\ it computes a \emph{signed digit representation} 
of $x$ as explained in the introduction, that is, a stream $a$ of signed digits
with $\sigma(a)=x$. 
This will be made precise in Lemma~\ref{lem-approx} (Sect.~\ref{sec-coco}).
\end{exa}

\begin{exa}[{\bf lists, streams and trees}]
Let the Scott-domain $D$ be defined by the recursive domain equation
$D = 1 + D\times D$ where $1:= \set{\bot}$ is a one point domain and 
``$+$'' denotes the separated 
sum of domains (see \cite{GierzHofmannKeimelLawsonMisloveScott03} 
for information on domains).
The elements of $D$ are $\bot$ (the obligatory least element), $\Cnil := \Cleft(\bot)$, 
and $\Ccons(x,y) := \Cright(x,y)$ where $x,y\in D$.
Define $\Phi : \pow{D}\to\pow{D}\to\pow{D}$ by 
\[\Phi(X)(Y) := 
\set{\Cnil}\cup\set{\Ccons(x,y)\mid x\in X, y\in Y}\]
Clearly, $\Phi$ is monotone in both arguments. For a fixed set $X\tm D$,
$\List(X) := \lfp{(\Phi(X))}$ ($= \lfpt{Y}{\Phi(X)(Y)}$) 
can be viewed as the set of \emph{finite} 
lists of elements in $X$, and 
$\Stream(X) := \gfp{(\Phi(X))}$ ($= \gfpt{Y}{\Phi(X)(Y)}$) 
as the set of \emph{finite or infinite} lists or 
\emph{streams} of elements in $X$. 
%
%Hence, for example, 
%$\lfp{(\Phi(\set{\Cnil}))}$ is a copy of the natural numbers. 
%
Since $\lfp{}$ is monotone the operator 
$\List :\pow{D}\to\pow{D}$ is again monotone.
Hence we can define $\tree := \gfp{\List}\tm D$ which is the set
of finitely branching wellfounded or non-wellfounded trees.
On the other hand, $\tree' := \lfp{\Stream}$ consist of all finitely 
or infinitely branching wellfounded trees.
The point of this example is that the definition of $\tree$ 
is similar to the characterization of uniformly continuous 
functions from $\II^n$ to $\II$ in Sect.~\ref{sec-coco},
the similarity being the fact that it is a coinductive definition with an 
inductive definition in its body. The set $\coco_0$ of the previous example
corresponds to the case $n=0$ where the inner inductive definition 
is trivial.
\end{exa}

\noindent
{\bf Formalization}\quad
We now sketch the formal system for reasoning about
inductive and coinductive definitions (a full account is given
in~\cite{BergerCCA09,SeisenBerger10}).
Since we only consider (co)inductive definitions of subsets of a given
``universal set'' we can work in a many-sorted first-order predicate
logic with free predicate variables extended by the possibility to
form for a predicate $\preda$ which is strictly positive (s.p.) in a
predicate variable $X$ the predicates $\lfpt{X}{\preda}$ and
$\gfpt{X}{\preda}$ denoting the least and greatest fixed points of the
monotone set operator defined by $\preda$. For example, $\preda$ could
be given as a comprehension term $\set{\vec x \mid A(\vec x, X)}$
where $A(\vec x, X)$ is a formula which is s.p. in $X$. The formula
$A(\vec x,X)$ may have further free object and predicate variables.
``Nested'' inductions/coinductions such as $\gfpt{X}\lfpt{Y}\set{\vec
  x \mid A(\vec x, X, Y)}$, where $A(\vec x,X,Y)$ is strictly positive
in $X$ and $Y$, are allowed.  Hence the second example above can be
formalized.  As a proof calculus we use intuitionistic natural
deduction with axioms expressing (co)closure and (co)induction for
(co)inductively defined predicates. Further axioms describing the
mathematical structures under consideration can be freely added as
long as (we know that) they are true and do not contain
disjunctions. The latter restriction ensures that these ``ad-hoc
axioms'' have no computational content, as will be explained in
Sect.~\ref{sec-realizability}. 
Note that, for example, the formula $\all{x\in\NN}\ex{y\in\NN}(y^2 \le
x < (y+1)^2)$ \emph{does} have computational content since the
definition of the predicate $\NN$, given in the first example,
contains a disjunction. Hence, although true, this formula must not be
used as an axiom, but needs to be proven.

In the first example object variables are of sort $\RR$ 
while in the second example they are of sort $D$
(hence $\RR$ and $D$ are the ``universal sets''). 
The second example shows how data structures
which normally would be defined as initial algebras or final coalgebras
of endofunctors on the category of sets can be introduced in our system.
The domain-theoretic modelling has the further advantage that partial
objects (e.g. lists or trees with possibly undefined nodes and leaves) 
can be described and reasoned about as well, without extra effort.
%
% The reason why the definitions above takes place within one ``universal''
% domain $D$ is that in this way the (co)inductively defined sets can be
% obtained simply as least and greatest fixed points of operators on the
% partially ordered powerset of $D$. Of course, lists, streams and trees
% can be defined directly in the category of sets, as initial algebras
% and final coalgebras of certain functors, but these constructions
% would \emph{not} take place within a partial order.
%
We believe that, by restricting ourselves to 
categories which are just powersets, partially ordered by inclusion, the
constructions become easier to understand for non-category-theorists,
and the formal system sketched above is simpler than one describing
initial algebras and final coalgebras of functors in general.

\section{Program extraction from proofs}
\label{sec-realizability}
In this section we briefly explain how we extract programs from proofs.
Rather than giving technical definitions
we only sketch the formal framework and explain the extraction method 
by means of simple examples, which hopefully provide a 
good intuition also for non-experts. 
%
%and then make some general remarks
%concerning the computational content of induction and coinduction.
%
More details and full correctness proofs can be found in \cite{BergerCCA09}
and \cite{SeisenBerger10}.

The method of program extraction we are using is 
based on an extension and variation of Kreisel's 
\emph{modified realizability} \cite{Kreisel59}.
The \emph{extension} concerns the addition of inductive and coinductive
predicates. Realizability  %, more precisely $q$-realizability, 
for such predicates has been studied previously, in the
slightly different context of $\mathbf{q}$-realizability 
by Tatsuta~\cite{Tatsuta98}.
The \emph{variation} concerns the fact that we are treating the
first-order part of the language (i.\ e.\ quantification over
individuals) in a `uniform' way, that is, realizers do not depend on
the individuals quantified over.  This is similar to the common 
uniform treatment of
second-order variables~\cite{Troelstra73}.  The argument is that
an arbitrary subset of a set is such an abstract (and even vague)
entity so that one should not expect an algorithm to depend on it.
With a similar argument one may say that individuals of an abstract
mathematical structure ($\RR$, model of set-theory, etc.) are
unsuitable as inputs for programs.
Hence, a realizer of a formula $\all{x}A(x)$ is an object $a$
such that $a$ realizes $A(x)$ for \emph{all} $x$ where $a$ does not depend on $x$.
A realizer of a formula $\ex{x}A(x)$ is an object $a$ such that
$a$ realizes $A(x)$ for \emph{some} $x$. Note that the witness $x$ is not part of the
realizer $a$.
But which data should a program then depend on and which should it produce?
The answer is: data defined by the `propositional skeletons' of 
formulas and `canonical' proofs. 
%
% For example, the propositional 
% skeleton of the canonical proof that $3$ is a natural number records the 
% three-fold application of the successor clause 
% $\all{x}(\NN(x)\to\NN(x+1))$ 
% to the base clause $\NN(0)$, hence it can be viewed as the 
% unary representation of the number $3$.

\medskip

\noindent
{\bf Example (parity)}\quad
Let us extract a program from a proof of
%
%\[\all{x,y}(\NN(x) \land \NN(y) -> \NN(x+y)\]
%
\begin{equation}
\label{eq-parity}
\all{x}(\NN(x) \imp \ex{y}(x = 2y \lor x = 2y+1))
\end{equation}
where the variable $x$ ranges over real numbers and 
the predicate $\NN$ is defined as in the example in 
Sect.~\ref{sec-ind-coind}, i.\ e.
\begin{equation}
\label{eq-N}
\NN := \lfpt{X}{\set{x \mid x = 0 \lor \ex{y}(X(y) \land x = y+1)}}
\end{equation}
The type corresponding to (\ref{eq-N}) is obtained by
the following \emph{type extraction\/}: % process:
\begin{enumerate}[$\bullet$]
\item replace every atomic formula of the form $X(t)$ by a type variable $\alpha$
associated with the predicate variable $X$,
\item replace other atomic formulas by the unit or `void' type $\one$,
\item delete all quantifiers and object terms (i.\ o.\ w.\ remove all 
first-order parts),
\item replace $\lor$ by $+$ (disjoint sum) and 
      $\land$ by $\times$ (cartesian product),
\item carry out obvious simplifications (e.g.\ 
replace $\alpha\times\one$ by $\alpha$).
\end{enumerate} 
Hence we arrive at the type definition
\[\nat := \lfpt{\alpha}{\one + \alpha}\]
In Haskell we can define this type as
%(using the constructors \verb|Zero| and \verb|Succ| for $0$ and successor)
%
\begin{code}
data Nat = Zero | Succ Nat      -- data
   deriving Show
\end{code}
%
%but it is more convenient to use the built-in data type of integers
%instead
%
%\begin{code}
%type Nat = Int  -- (comment) we use only non-negative integers
%\end{code}
%
The line ``\verb|deriving Show|'' creates a default printing method
for values of type \verb|Nat|.
The comment ``\verb|-- data|'' indicates that we intend to use the 
recursive data type \verb|Nat| as an inductive data type (or initial algebra).
This means that the ```total'', or ``legal'' elements are inductively generated
from \verb|Zero| and \verb|Succ|. The natural (domain-theoretic) semantics
of \verb|Nat| also contains, for example, an ``infinite'' element 
defined recursively by \verb|infty = Succ infty| which is not total
in the inductive interpretation of \verb|Nat|. In a coinductive interpretation
(usually indicated by the comment \verb|-- codata|) \verb|infty|
would count as total.\footnote{That Haskell does not distinguish
between the inductive and the coinductive interpretation is 
justified by the limit-colimit-coincidence in the domain-theoretic
semantics~\cite{AbramskyJung94}.}

By applying type extraction to (\ref{eq-parity}) we see that a
program extracted from a proof of this formula will have type $\nat \to \one+\one$.
%(where ``$\nat\to$'' does not come from ``$\all{x}$'' 
%in~(\ref{eq-parity}), but from ``$\NN(x)\imp$'').
%
%(where the arrow is now to be read as a function space constructor).
%
By identifying the two-element type $\one+\one$ with the Booleans we get
the Haskell signature
\begin{code}
parity :: Nat -> Bool
\end{code}
The definition of \verb|parity| can be extracted from the obvious
inductive proof of (\ref{eq-parity}): For the base, $x=0$, we take $y = 0$
to get $x=2y$. In the step, $x+1$, we have, by i.\ h.\ some
$y$ with $x = 2y \lor x = 2y+1$. In the first case $x+1=2y+1$,
in the second case $x+1=2(y+1)$. The Haskell program extracted from 
this proof is 
\begin{code}
parity Zero     = True
parity (Succ x) = case parity x of {True -> False ; False -> True} 
\end{code}
%\begin{code}
%parity :: Nat -> Bool
%parity 0     = True
%parity (x+1) = case parity x of {True -> False ; False -> True} 
%\end{code}
%
%i.\ e. \verb|parity (Succ x) = not (parity x)|. 
%
If we wish to compute not only the parity, but as well the
rounded down half of $x$ (i.\ e.\ quotient and remainder),
we just need to relativize the
quantifier $\ex{y}$  in (\ref{eq-parity})   to $\NN$ 
(i.\ e.\ $\all{x}(\NN(x) \imp \ex{y}(\NN(y) \land (x = 2y \lor x = 2y+1))$)
and use in the proof the fact that $\NN$ is closed under the successor 
operation. The extracted program is then
\begin{code}
parity1 :: Nat -> (Nat,Bool)
parity1 Zero     = (Zero,True)
parity1 (Succ x) = case parity1 x of 
                    {(y,True)  -> (y,False) ; 
                     (y,False) -> (Succ y,True)} 
\end{code}
In order to try these programs out it is convenient to have a
function that transforms built-in integers into elements of \verb|Nat|.
\begin{code}
iN :: Integer -> Nat  -- defined for non-negative integers only
iN 0     = Zero
iN (n+1) = Succ (iN n)
\end{code}
Now try \verb|parity (iN 7)| and \verb|parity1 (iN 7)|. 

The examples above show that we can get meaningful computational
content despite ignoring the first-order part of a proof. Moreover, we
can fine-tune the amount of computational information we extract from
a proof by simple modifications of formulas and proofs. Note also that
we used arithmetic operations on the reals and their
arithmetic laws without implementing or proving them. Since these laws
can be written as equations (or conditional equations) their
associated type is void. This ensures that it is only their truth that
matters, allowing us to treat them as ad-hoc axioms without bothering
to derive them from basic axioms. In general, a formula containing
neither disjunctions nor free predicate variables has always a void
type and can therefore be taken as an axiom as long as it is true.

The reader might be puzzled by the fact that quantifiers are ignored 
in the program extraction process. Quantifiers are, of course,
\emph{not} ignored in the \emph{specification} of the extracted program,
i.\ e.\ in the definition of realizability. 
For example, the statement that
the program $p :=$\verb|parity| realizes (\ref{eq-parity})
is expressed by %the formula 
%(writing $\btrue$, $\bfalse$ for \verb|True|, \verb|False|)
%
\[\all{n,x}(\ire{n}{\NN(x)} \imp \ex{y}(p(n)=\verb|True| \land x = 2y \lor 
p(n)=\verb|False| \land x = 2y+1))\]
where $n$ ranges over $\nat$ (i.\ e.\ %the terms
\verb|Zero|, \verb|Succ Zero|, \verb|Succ(Succ Zero)|, \ldots) 
and $\ire{n}{\NN(x)}$ 
means that $n$ realizes $\NN(x)$ which in this case amounts to $x$ being the
value of $n$ in $\RR$. 
%(interpreting \verb|Zero| as $0$ and \verb|Succ| as the 
%successor function in $\RR$). 
The \emph{Soundness Theorem} for 
realizability states that the program extracted from a proof realizes the 
proven formula (cf.~\cite{SeisenBerger10}; see also e.g. 
\cite{Troelstra73}, \cite{Tatsuta98} for proofs of soundness 
for related notions of realizability).

\paragraph{Remark.}
%
%Before continuing with an example involving coinduction let us 
%explain the semantical nature of realizers. 
%
Although the example above 
seems to suggest that realizers are typed, it is in fact more convenient
to work with per se untyped realizers taken from a domain $D$ which is defined
by the recursive domain equation
\[ D = 1 + D + D + D\times D + [D\to D] \]
($[D \to D ]$ denotes the domain of continuous endofunctions on $D$).
It is well-known that such domain equations of ``mixed variance'' 
have effective solutions up to isomorphism 
(see e.g.\ \cite{GierzHofmannKeimelLawsonMisloveScott03}).
Type expressions $\one$, $\alpha$, $\rho+\sigma$, $\rho\times\sigma$,
$\rho\to\sigma$, $\lfpt{\alpha}{\rho}$, $\gfpt{\alpha}{\rho}$ with
suitable positivity conditions for fixed point types can naturally be
interpreted as subsets of $D$.%
\footnote{In fact, general recursive types $\fix{\alpha}{\rho}$ 
without positivity
restriction have a natural semantics in $D$ as (ranges of) finitary
projections~\cite{AmadioBruceLongo86}.}
Realizers extracted from proofs are
terms of an untyped $\lambda$-calculus with constructors and recursion
which denote elements of $D$. It can be shown that the value of a
program extracted from a proof of a formula $A$ lies in the denotation
of the type extracted from $A$~\cite{SeisenBerger10}. One can also
show that the denotational and operational semantics ``match''
(computational adequacy). This implies that extracted programs are
correct, both in a denotational and operational
sense~\cite{BergerCCA09}.  Note that in general a realizing term
denotes an element of $D$, but not an element of the mathematical
structure the proof is about. It is just by coincidence that in the
example above the closed terms of type $\nat$ denote at the same time
elements of $D$ and real numbers, and that both denotations 
are in a one-to-one correspondence. 
In the case of the predicate $\coco_0$ defined below (and even more so
for the predicates $\coco_n$ defined in Sect.~\ref{sec-coco}) there
is no such tight correspondence between objects satisfying a predicate 
and realizers of that fact.

\medskip

\begin{exa}[{\bf from Cauchy sequences to signed digit streams}]
In the second example of Sect.~\ref{sec-ind-coind} we defined the
set $\coco_0$  coinductively by
\begin{equation}
\label{eq-cocozero}
\coco_0  = 
     \gfpt{X}{\set{x\mid\ex{d}(\SD(d) \land \II_d(x) \land X(2x-d))}}
\end{equation}
Since $\SD(d)$ is shorthand for $d = -1 \lor d = 0 \lor d = 1$, 
and $\II_d(x)$ is shorthand for $|x-d/2|\le 1/2$, the corresponding type
is 
\begin{equation}
\label{eq-sds}
\mathrm{C0}  = \gfpt{\alpha}{(\one+\one+\one) \times \alpha}
\end{equation}
Identifying notationally the type $\one+\one+\one$ with $\SD$ 
\begin{code}
data SD  = N | Z | P  -- Negative, Zero, Positive
   deriving Show
\end{code}
we obtain that $\mathrm{C0}$ is the type of infinite
streams of signed digits, i.\ e.\ the largest fixed point of the type operator
\begin{code}
type J0 alpha = (SD,alpha)
\end{code}
This corresponds to the set operator $\out_0$ which
$\coco_0$ is the largest fixed point of.
Therefore we define (choosing \verb|ConsC0| as constructor name)
\begin{code}
data C0 = ConsC0 (J0 C0)    -- codata         
\end{code}
i.\ e.\ \verb|C0 = ConsC0 (SD,C0)|.
\end{exa}

We wish to extract a program that computes a signed digit representation
of $x\in\II$ from a fast rational Cauchy sequence converging to $x$
and vice versa.
Set
\begin{eqnarray*}
\QQ(x) &:=& \ex{n,m,k}(\NN(n)\land\NN(m)\land\NN(k) \land x = (n-m)/k)\\
A(x)   &:=& \all{n}(\NN(n) \imp \ex{q}(\QQ(q)\land |x-q|\le 2^{-n}))
\end{eqnarray*}
Constructively, $A(x)$ means that there is a fast Cauchy sequence of
rational numbers converging to $x$. Technically, this is expressed 
by the fact that the realizers of $A(x)$ are precisely such sequences.
On the other hand, realizers of $\coco_0(x)$ are exactly the
infinite streams of signed digits $a$ such that $\sigma(a)=x$
%($\sigma$ was defined in the Introduction). 
In general, 
realizability for inductive resp.\ coinductive predicates 
%(such as, for example,  $\NN$ and $\coco_0$) 
is defined in a straightforward way,
again as an inductive resp.\ coinductive definition 
(see~\cite{BergerCCA09,SeisenBerger10} for details).
\begin{lem}
\label{lem-approx}
\begin{equation}
\label{eq-cauchy-coco}
\all{x}(\II(x) \land A(x) \bimp \coco_0(x))
\end{equation}
\end{lem}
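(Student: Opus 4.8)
The plan is to prove the biconditional by establishing the two implications separately; both carry computational content — the forward one yields a program turning a fast Cauchy sequence into a signed digit stream, the backward one does the reverse, which is the ``and vice versa'' alluded to in the text. One half of the $(\Leftarrow)$ direction is already free: the second example of Section~\ref{sec-ind-coind} shows $\coco_0\tm\II$ by the fixed point property, so $\coco_0(x)\imp\II(x)$. It therefore remains to prove $\II(x)\land A(x)\imp\coco_0(x)$ and $\coco_0(x)\imp A(x)$.

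For the forward direction I would use coinduction. Put $Y:=\set{x\mid\II(x)\land A(x)}$ and show $Y\tm\out_0(Y)$; the coinduction principle then gives $Y\tm\coco_0$. So fix $x$ with $\II(x)$ and $A(x)$. Instantiating $A(x)$ at $n=2$ produces a rational $q$ with $|x-q|\le 2^{-2}=1/4$. Since $q$ is rational I can decide which of $q\le -1/4$, $-1/4<q<1/4$, $q\ge 1/4$ holds, and choose $d:=-1,0,1$ accordingly. A short estimate using $|x-q|\le 1/4$ together with $\II(x)$ shows $x\in\II_d$ in each case (e.g.\ if $q\le -1/4$ then $x\le q+1/4\le 0$ and $x\ge -1$, so $x\in\II_{-1}$). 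From $\II_d(x)$, i.e.\ $|x-d/2|\le 1/2$, we get $|2x-d|\le 1$, hence $\II(2x-d)$; and $A(2x-d)$ holds because a rational approximation of $x$ to precision $2^{-(n+1)}$ turns, via $2q-d$, into an approximation of $2x-d$ to precision $2^{-n}$. Thus $2x-d\in Y$ and $x\in\out_0(Y)$, as required. This digit selection is the computational core and is where the output signed digit $d$ is written.

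For the backward direction I would prove, by induction on $n\in\NN$, the strengthened statement
\[
B(n)\dequiv\all{x}(\coco_0(x)\imp\ex{q}(\QQ(q)\land|x-q|\le 2^{-n}\land\coco_0(2^n(x-q)))).
\]
For $n=0$ take $q:=0$, using $\coco_0\tm\II$ to get $|x|\le 1=2^{-0}$. For the step, given $\coco_0(x)$, apply the induction hypothesis to obtain $q$ with $\coco_0(y)$ where $y:=2^n(x-q)$; unfolding the fixed point of $\out_0$ at $y$ yields $d\in\SD$ with $\II_d(y)$ and $\coco_0(2y-d)$. Put $q':=q+d\,2^{-(n+1)}$, so $\QQ(q')$. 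Then $x-q'=2^{-(n+1)}(2y-d)$, whence $|x-q'|\le 2^{-(n+1)}$ since $|2y-d|\le 1$, and $2^{n+1}(x-q')=2y-d$ gives $\coco_0(2^{n+1}(x-q'))$. Dropping the last conjunct of $B(n)$ yields exactly $A(x)$. Here the repeated unfolding reads one signed digit per step and accumulates it into the dyadic rational $q$.

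The hard part will be the forward direction, specifically arranging the choice of $d$ to be simultaneously constructively valid and correct at the boundaries. The three comparison cases must be decidable (which holds precisely because $q$ is rational) and must jointly guarantee $x\in\II_d$ for the selected $d$; the precision $1/4$ is exactly what makes the length-$1/2$ interval around $x$ fit inside one of the overlapping intervals $\II_{-1},\II_0,\II_1$. The remaining verifications — that $2x-d$ stays in $\II$ and that the Cauchy data transforms correctly under the scale-and-shift — are routine once the digit step is fixed, and in both directions the stream/interval bookkeeping is mechanical.
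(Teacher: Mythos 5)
Your proof is correct, and your forward direction is precisely the paper's: coinduction with the invariant $\II\cap A$, instantiating $A(x)$ at $n=2$, and selecting the digit by comparing the rational $q$ against $\pm 1/4$ --- you simply fill in the digit-selection step that the paper dismisses as ``easy to find (constructively)''. The backward direction, which the paper leaves as an exercise, is where you take a slightly different route. The paper's stated intermediate lemma is the unstrengthened $\all{n}\all{x}(\coco_0(x) \imp \ex{q}(\QQ(q)\land|x-q|\le 2^{-n}))$, proved by induction on $n$: in the step one \emph{first} unfolds $\coco_0(x)$ by coclosure to get $d$ with $\coco_0(2x-d)$, and \emph{then} applies the induction hypothesis to the shifted point $2x-d$, returning $(q+d)/2$; no strengthening is needed because the quantifier over $x$ sits inside the induction. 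You instead apply the induction hypothesis at $x$ itself, which forces you to strengthen the inductive statement by the residual conjunct $\coco_0(2^n(x-q))$, from which the next digit is then extracted. Both arguments are sound and use the same two ingredients (induction on the precision $n$ plus coclosure), but they extract different programs: the paper's is the recursive \texttt{sd2cauchy} (peel a digit from the stream, recurse on the tail, combine as $(d+q)/2$), whereas yours is an accumulator-style loop threading the pair of current approximation and residual stream, updating $q' = q + d\,2^{-(n+1)}$ at each step. Your variant makes explicit the invariant --- that the residue of the stream again witnesses $\coco_0$ --- which the paper's exercise tacitly relies on, at the cost of carrying a stronger induction hypothesis.
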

\begin{proof}
  To prove the implication from left to right we show $\II\cap A \tm
  \coco_0$ by coinduction, i.\ e.\ we show $\II\cap A \tm \out_0(\II\cap
  A)$.
  Assume $\II(x)$ and $A(x)$. We have to show (constructively!)
  $\out_0(\II\cap A)(x)$, i.\ e.\ we need to find $d\in\SD$ such that
  $x\in \II_d$ and $2x-d\in\II\cap A$.  Since, clearly 
  the assumption $A(x)$ implies $A(2x-d)$ for any 
  $d\in\SD$, and furthermore $x\in \II_d$ holds iff 
  $2x-d\in\II$, we only need to find some signed digit $d$
  such that $x\in\II_d$.  The assumption $A(x)$,
  used with $n=2$, yields a rational number $q$ with $|x-q|\le 1/4$. It is
  easy to find (constructively!)  a signed digit $d$ such that
  $[q-1/4,q+1/4]\cap\II \tm \II_d$.  For that $d$ we have $x\in\II_d$.

For the converse implication we show 
$\all{n}(\NN(n) \imp \all{x}(\coco_0(x) \imp 
\ex{q}(\QQ(q)\land |x-q|\le 2^{-n}))$
by induction on $\NN(n)$ using the coclosure axiom for $\coco_0$.
We leave the details as an exercise for the reader. 
\end{proof}
The type corresponding to the predicate $\QQ$ is
$\nat\times\nat\times\nat$, which we however implement by Haskell's
built-in rationals, since it is only the arithmetic operations on
rational numbers that matter, whatever the representation. (It is
possible - and instructive as an exercise - to extract implementations
of the arithmetic operations on rational numbers w.r.t.\ the
representation $\nat\times\nat\times\nat$ from proofs that $\QQ$ is
closed under these operations. In order to obtain reasonably efficient
programs one has to modify the definition of $\QQ$ by requiring $n-m$
and $k$ to be relatively prime.)
The type of the predicate $A$ is $\nat\to\rat$. 
The program extracted from the first part of the proof of
Lemma~\ref{lem-approx} is
\begin{code}
cauchy2sd :: (Nat -> Rational) -> C0
cauchy2sd = coitC0 step
\end{code}
where \verb|step| is the program extracted from the proof of
$\II\cap A \tm \out_0(\II\cap A)$:
\begin{code}
step :: (Nat -> Rational) -> J0(Nat -> Rational)
step f = (d,f')  where
  q = f (Succ (Succ Zero))
  d = if q > 1/4 then P else if abs q <= 1/4 then Z else N
  f' n = 2 * f (Succ n) - fromSD d

fromSD :: SD -> Rational
fromSD d = case d of {N -> -1 ; Z -> 0 ; P -> 1}
\end{code}
The program \verb|coitC0| is a polymorphic ``coiterator''
realizing the coinduction scheme 
$X\tm\out_0(X) \imp X\tm\gfp{\out_0}$: % (recall that $\gfp{\out_0}=\coco_0$).
%
% coit s x = ConsC0 . mapJ0 (coit s) . s   -- . = composition
\begin{code}
coitC0 :: (alpha -> J0 alpha) -> alpha -> C0
coitC0 s x = ConsC0 (mapJ0 (coitC0 s) (s x))

mapJ0 :: (alpha -> beta) -> J0 alpha -> J0 beta
mapJ0 f (d,x) = (d,f x)
\end{code}
An equivalent definition of \verb|coitC0| would be
\begin{code}
coitC0' s x = ConsC0 (d,coitC0' s y) where (d,y) = s x
\end{code}
The program extracted from the second part of the proof of
Lemma~\ref{lem-approx} is
\begin{code}
sd2cauchy :: C0 -> (Nat -> Rational)
sd2cauchy c n = aux n c  where
  aux Zero c = 0
  aux (Succ n) (ConsC0 (d,c)) = (fromSD d + aux n c)/2  
\end{code}
In order to try out the programs \verb|cauchy2sd| and \verb|sd2cauchy|
it is convenient to have translations between the types \verb|C0|
and Haskell's type of infinite streams of signed digits (below, ``\verb|:|''
is the cons operation for lists).
\begin{code}
c0s :: C0 -> [SD]
c0s (ConsC0 (d,c)) = d : c0s c

sc0 :: [SD] -> C0
sc0 (d:ds) = ConsC0 (d,sc0 ds)
\end{code}
Now evaluate 
\verb|let {f x = 2/3} in take 10 (c0s(cauchy2sd f))| and\\
\verb!let {ds = P:Z:ds} in [sd2cauchy (sc0 ds) (iN n) | n <- [0..9]]!\\ 
(\verb|ds| is the infinite list \verb|[P,Z,P,Z,...]| and
\verb![e(n) | n <- [0..9]]! is a list comprehension expression denoting
\verb|[e(0),...,e(n)]|).

We hope that the examples above give enough hints for understanding 
program extraction from coinductive proofs. 
Here is a sketch of how it works in general. Suppose $\gfp{\Phi}$
is a coinductive predicate defined by a strictly positive set operator
$\Phi$ ($\out_0$ in our example), e.g.\ $\Phi(X)=\set{\vec x\mid A(X,\vec x)}$
where $A$ is s.p. in $X$.
From $\Phi$ one extracts a s.p. type operator 
\begin{code}
data Phi alpha = PhiDef(alpha)  -- to be replaced by a suitable 
                                -- extracted type definition
\end{code}
(\verb|J0| in our example). Due to the strict positivity of \verb|Phi|
one can define, by structural recursion on the definition of
\verb|Phi alpha|, a polymorphic map operation
\begin{code}
mapPhi :: (alpha -> beta) -> Phi alpha -> Phi beta
mapPhi = undefined -- to be replaced by an extracted program
\end{code}
and from that, recursively, the coiterator 
%
%(since the definition of
%the type operator \verb|Phi| is missing the Haskell
%code in the rest of this section does \emph{not} compile)
%
\begin{code}
coitFix :: (alpha -> Phi alpha) -> alpha -> Fix
coitFix s x = ConsFix (mapPhi (coitFix s) (s x))
\end{code}
where \verb|Fix| is the largest fixed point of \verb|Phi|:
\begin{code}
data Fix = ConsFix (Phi Fix)  -- codata
\end{code}
The program extracted from a coinductive proof of $X\tm\gfp{\Phi}$
is 
%then 
\verb|coitFix step| where \verb|step :: alpha -> Phi alpha|
is the program extracted from the proof of $X\tm\Phi(X)$ (\verb|alpha|
is the type corresponding to the predicate $X$).
For inductive proofs the construction is similar: One defines recursively
an ``iterator''
\begin{code}
itFix :: (Phi alpha -> alpha) -> Fix -> alpha 
itFix s (ConsFix z) = s (mapPhi (itFix s) z)
\end{code}
where the type \verb|Fix| is now viewed as the \emph{least} fixed point of 
\verb|Phi|.
The program extracted from an inductive proof of $\lfp{\Phi}\tm X$
is now \verb|itFix step| where \verb|step :: Phi alpha -> alpha|
is extracted from the proof of $\Phi(X)\tm X$. 
It is a useful exercise to re-program the data type \verb|Nat| and the
iteratively defined functions \verb|parity|, \verb|parity1|
and \verb|sd2cauchy| following strictly this general scheme.
%
%More details on the
%realizers of induction and coinduction can be found in \cite{SeisenBerger10}.
%
The above sketched computational interpretations of induction and 
coinduction and more general recursive schemes can be  
derived from category-theoretic considerations using the initial 
algebra/final coalgebra interpretation
of least and greatest fixed points (see for example 
\cite{Malcolm90,HancockSetzer03,AbelMatthesUustalu05,CaprettaUustaluVene06}).
%
%in revised paper mention cca-jucs paper

\section{Coinductive definition of uniform continuity}
\label{sec-coco}
For every $n$ we define a set $\coco_n\tm\funn{n}$ for which
we will in Sect.~\ref{sec-digit} show that it coincides with the set of
uniformly continuous functions from $\II^n$ to $\II$.

In the following we let 
$n,m,k,l,i$ range over $\NN$, $p,q$ over $\QQ$, $x,y,z$ over $\RR$,
and $d,e$ over $\SD$. Hence, for example, $\ex{d}A(d)$ is shorthand
for $\ex{d}(\SD(d)\land A(d))$ and $\bigwedge_{d}A(d)$ abbreviates 
$A(-1) \land A(0) \land A(1)$. 
%
%We also set $\ivl{q}{l} := \set{x\in\RR\mid |x-q|\le 2^{-l}}$. 
%Hence $\II =\ivl{0}{0}$. 
%
We define average functions and their inverses
\begin{eqnarray*}
\av{d}\colon\RR\to\RR, && \av{d}(x) := \frac{x+d}{2}\\
\va{d}\colon\RR\to\RR, && \va{d}(x) := 2x-d
\end{eqnarray*}
Note that $\av{d}[\II] = \II_d$ and hence
$f[\II]\tm\II_d$ iff $(\va{d}\comp f)[\II]\tm\II$.
We also need extensions of the average functions to $n$-tuples
\[\avn{i}{d}(x_1,\ldots,x_{i-1},x_i,x_{i+1},\ldots,x_n) :=
(x_1,\ldots,x_{i-1},\av{d}(x_i),x_{i+1},\ldots,x_n)\]
We define an operator 
$\outb_n\colon\pow{\funn{n}}\to\pow{\funn{n}}\to\pow{\funn{n}}$ by
\[\outb_n(X)(Y) := \set{f \mid \ex{d}(f[\II^n]\tm\II_d \land X(\va{d}\comp f)) 
  \lor
 \ex{i}\bigwedge_{d}Y(f\comp\avn{i}{d})}\]
Since $\outb_n$ is strictly positive in both arguments, we can
define an operator $\out_n\colon\pow{\funn{n}}\to\pow{\funn{n}}$  
by 
\[\out_n(X) := \lfp{(\outb_n(X))} = \lfpt{Y}{\outb_n(X)(Y)}\]
Hence, $\out_n(X)$ is the set inductively defined by the following
two rules:
\begin{equation}
\label{eq-outn-w}
\ex{d}(f[\II^n]\tm\II_d \land X(\va{d}\comp f)) \imp \out_n(X)(f)
\end{equation}
\begin{equation}
\label{eq-outn-r}
\ex{i}\bigwedge_{d}\out_n(X)(f\comp\avn{i}{d}) \imp \out_n(X)(f)
\end{equation}
Since, as mentioned in Sect.~\ref{sec-ind-coind}, the operation $\lfp{}$ 
is monotone, $\out_n$ is monotone as well. Therefore, 
we can define $\coco_n$ as the largest fixed point of $\out_n$, 
\begin{equation}
\label{eq-cocon-def}
\coco_n = \gfp{\out_n} = \gfpt{X}{\lfpt{Y}{\outb_n(X)(Y)}} 
\end{equation}
Note that for $n=0$ the second argument $Y$ of $\outb_n$ becomes
a dummy variable, and therefore $\out_0$ and $\coco_0$ are the 
same as in the corresponding example in Sect.~\ref{sec-ind-coind}.
Note also that if $f\in\coco_n$, then $f[\II^n]\tm\II_d\tm\II$ for some
$d\in\SD$ since
$\coco_n = \lfpt{Y}{\outb_n(\coco_n)(Y)}=\outb_n(\coco_n)(\coco_n)$.
%
%It is possible to show that a function $f:\II^n\to\RR$ lies in
%$\coco_n$ iff $f$ is u.\ c.\ and $f[\II^n]\tm\II$,
%and this can be proven constructively, w.r.t. the standard 
%constructive notion of uniform continuity 
%(see e.g.\ \cite{Schwichtenberg08}).
%%
%From such a proof programs can be extracted that translate
%between the usual representation of u.\ c.\ functions and the one given
%by the predicate $\coco_n$~\cite{Berger09}. 
%%
%However, experiments have shown that such programs are not very efficient.
%Better results are extracted from proofs for specific families of
%functions which we study in the next section.

The type corresponding to the formula $\outb_n(X)(Y)$ is
$\funa_n(\tva)(\tvb) := \SD\times\alpha + \NN_n\times\beta^3$
where $\NN_n := \set{1,...,n}$.
Therefore, the type of $\out_n(X)$ is 
$\lfpt{\beta}{\SD\times\alpha + \NN_n\times\beta^3}$ 
which is the type of finite ternary trees with 
indices $i\in\NN_n$ attached to the inner nodes and pairs 
$(d,x)\in\SD\times\alpha$ attached to the leaves. 
Consequently, the type of $\coco_n$ is
\begin{equation}
\label{eq-treen-def}
\gfpt{\alpha}{\lfpt{\beta}{\SD\times\alpha + \NN_n\times\beta^3}}
\end{equation}
This is the type of non-wellfounded trees obtained by 
infinitely often stacking the finite trees on top of each other,
i.\ e.\ replacing in a finite tree each $x$ in a leaf by another 
finite tree and repeating the process in the 
substituted trees ad infinitum.
Alternatively, the elements of~(\ref{eq-treen-def}) can be described as 
non-wellfounded trees without leaves such that 
\begin{enumerate}[$-$]
\item each node is either a 
\begin{enumerate}[\quad]
\item \emph{writing node} labelled with a signed digit and with 
        one subtree, or a
\item \emph{reading node} labelled with an index $i\in\NN_n$ and with 
        three subtrees;        
\end{enumerate}
\item each path has infinitely many writing nodes.
\end{enumerate}
The interpretation of such a tree as a stream transformer is easy.
Given $n$ signed digit streams $a_1,\ldots,a_n$ as inputs, run through 
the tree and output a signed digit stream as follows:
\begin{enumerate}[\quad 1.]
\item At a writing node $(d,t)$ output $d$ and continue with the 
        subtree $t$.
\item At a reading node $(i,(t_d)_{d\in\SD})$ continue with
        $t_d$, where $d$ is the head of $a_i$, and replace $a_i$ by its tail.
\end{enumerate}
Fig.~\ref{fig-tree} shows an initial segment of a tree representing
the function 
\[f\colon\II\to\II,\quad f(x) = \frac{2}{3}(1-x^2)-1\] 
which is an instance of the family of logistic
maps discussed in Sect.~\ref{sec-digit}.
\setlength{\unitlength}{0.6em}
\begin{figure}
\begin{tikzpicture}[inner sep = 0.4mm,fill=blue!20]
\path(3.0,0.0) node(a) [circle,draw] {N}
(3.0,-1.0) node(aa) [circle,draw] {}
(-1.0,-2.0) node(aaa) [circle,draw] {}
(-2.33,-3.0) node(aaaa) [circle,draw] {N}
(-2.33,-4.0) node(aaaaa) [circle,draw] {}
(-2.77,-5.0) node(aaaaaa) [circle,draw] {}
(-2.33,-5.0) node(baaaaa) [circle,draw] {}
(-1.89,-5.0) node(caaaaa) [circle,draw] {P}
(-1.0,-3.0) node(baaa) [circle,draw] {Z}
(-1.0,-4.0) node(abaaa) [circle,draw] {}
(-1.44,-5.0) node(aabaaa) [circle,draw] {N}
(-1.0,-5.0) node(babaaa) [circle,draw] {Z}
(-0.56,-5.0) node(cabaaa) [circle,draw] {Z}
(0.33,-3.0) node(caaa) [circle,draw] {Z}
(0.33,-4.0) node(acaaa) [circle,draw] {P}
(0.33,-5.0) node(aacaaa) [circle,draw] {}
(-0.11,-6.0) node(aaacaaa) [circle,draw] {N}
(0.33,-6.0) node(baacaaa) [circle,draw] {Z}
(0.77,-6.0) node(caacaaa) [circle,draw] {Z}
(3.0,-2.0) node(baa) [circle,draw] {Z}
(3.0,-3.0) node(abaa) [circle,draw] {P}
(3.0,-4.0) node(aabaa) [circle,draw] {}
(1.67,-5.0) node(aaabaa) [circle,draw] {}
(1.23,-6.0) node(aaaabaa) [circle,draw] {N}
(1.67,-6.0) node(baaabaa) [circle,draw] {Z}
(2.11,-6.0) node(caaabaa) [circle,draw] {Z}
(3.0,-5.0) node(baabaa) [circle,draw] {Z}
(3.0,-6.0) node(abaabaa) [circle,draw] {P}
(3.0,-7.0) node(aabaabaa) [circle,draw] {}
(2.56,-8.0) node(aaabaabaa) [circle,draw] {}
(3.0,-8.0) node(baabaabaa) [circle,draw] {Z}
(3.44,-8.0) node(caabaabaa) [circle,draw] {}
(4.33,-5.0) node(caabaa) [circle,draw] {}
(3.89,-6.0) node(acaabaa) [circle,draw] {Z}
(4.33,-6.0) node(bcaabaa) [circle,draw] {Z}
(4.77,-6.0) node(ccaabaa) [circle,draw] {N}
(7.0,-2.0) node(caa) [circle,draw] {}
(5.67,-3.0) node(acaa) [circle,draw] {Z}
(5.67,-4.0) node(aacaa) [circle,draw] {P}
(5.67,-5.0) node(aaacaa) [circle,draw] {}
(5.23,-6.0) node(aaaacaa) [circle,draw] {Z}
(5.67,-6.0) node(baaacaa) [circle,draw] {Z}
(6.11,-6.0) node(caaacaa) [circle,draw] {N}
(7.0,-3.0) node(bcaa) [circle,draw] {Z}
(7.0,-4.0) node(abcaa) [circle,draw] {}
(6.56,-5.0) node(aabcaa) [circle,draw] {Z}
(7.0,-5.0) node(babcaa) [circle,draw] {Z}
(7.44,-5.0) node(cabcaa) [circle,draw] {N}
(8.33,-3.0) node(ccaa) [circle,draw] {N}
(8.33,-4.0) node(accaa) [circle,draw] {}
(7.89,-5.0) node(aaccaa) [circle,draw] {P}
(8.33,-5.0) node(baccaa) [circle,draw] {}
(8.77,-5.0) node(caccaa) [circle,draw] {}
;
\draw[thick] (a) -- (aa);
\draw[thick] (aa) -- (aaa);
\draw[thick] (aa) -- (baa);
\draw[thick] (aa) -- (caa);
\draw[thick] (aaa) -- (aaaa);
\draw[thick] (aaa) -- (baaa);
\draw[thick] (aaa) -- (caaa);
\draw[thick] (aaaa) -- (aaaaa);
\draw[thick] (aaaaa) -- (aaaaaa);
\draw[thick] (aaaaa) -- (baaaaa);
\draw[thick] (aaaaa) -- (caaaaa);
\draw[thick] (baaa) -- (abaaa);
\draw[thick] (abaaa) -- (aabaaa);
\draw[thick] (abaaa) -- (babaaa);
\draw[thick] (abaaa) -- (cabaaa);
\draw[thick] (caaa) -- (acaaa);
\draw[thick] (acaaa) -- (aacaaa);
\draw[thick] (aacaaa) -- (aaacaaa);
\draw[thick] (aacaaa) -- (baacaaa);
\draw[thick] (aacaaa) -- (caacaaa);
\draw[thick] (baa) -- (abaa);
\draw[thick] (abaa) -- (aabaa);
\draw[thick] (aabaa) -- (aaabaa);
\draw[thick] (aabaa) -- (baabaa);
\draw[thick] (aabaa) -- (caabaa);
\draw[thick] (aaabaa) -- (aaaabaa);
\draw[thick] (aaabaa) -- (baaabaa);
\draw[thick] (aaabaa) -- (caaabaa);
\draw[thick] (baabaa) -- (abaabaa);
\draw[thick] (abaabaa) -- (aabaabaa);
\draw[thick] (aabaabaa) -- (aaabaabaa);
\draw[thick] (aabaabaa) -- (baabaabaa);
\draw[thick] (aabaabaa) -- (caabaabaa);
\draw[thick] (caabaa) -- (acaabaa);
\draw[thick] (caabaa) -- (bcaabaa);
\draw[thick] (caabaa) -- (ccaabaa);
\draw[thick] (caa) -- (acaa);
\draw[thick] (caa) -- (bcaa);
\draw[thick] (caa) -- (ccaa);
\draw[thick] (acaa) -- (aacaa);
\draw[thick] (aacaa) -- (aaacaa);
\draw[thick] (aaacaa) -- (aaaacaa);
\draw[thick] (aaacaa) -- (baaacaa);
\draw[thick] (aaacaa) -- (caaacaa);
\draw[thick] (bcaa) -- (abcaa);
\draw[thick] (abcaa) -- (aabcaa);
\draw[thick] (abcaa) -- (babcaa);
\draw[thick] (abcaa) -- (cabcaa);
\draw[thick] (ccaa) -- (accaa);
\draw[thick] (accaa) -- (aaccaa);
\draw[thick] (accaa) -- (baccaa);
\draw[thick] (accaa) -- (caccaa);
\end{tikzpicture}
\caption{An initial segment of the tree of $f(x) = \frac{2}{3}(1-x^2)-1$.}
%\caption{An initial segment of the tree corresponding to the logistic map, 
%$f_a(x) = a(1-x^2)-1$, with $a = 2/3$.}
\label{fig-tree}
\end{figure}
In order to ``run'' this tree with an input stream of signed digits,
we follow the path determined by the input digits.  \verb|N|, \verb|Z|
or \verb|P| in the input stream means: go at a branching point left,
middle or right. The digits met on this path form the output
stream. For example, the input stream \verb|Z:Z:Z:Z:...| (representing
the number $0$) leads us along the spine of the tree and results in
the output stream \verb|N:Z;P:Z:P:Z:...| (representing
$\frac{-1}{2}+\frac{1}{8}+\frac{1}{32}+\ldots =
\frac{-1}{2}+\frac{1}{6}= -\frac{1}{3} = f(0)$) while the input stream
\verb|P:Z:Z:Z:...| (representing the number $\frac{1}{2}$) results in
the output stream \verb|N:Z;Z:Z:...| (representing $-\frac{1}{2}=
f(\frac{1}{2})$).
\footnote{The \LaTeX code for the display of the tree was generated 
automatically from a term denoting this tree which in turn was extracted from a 
formal proof that the function $f$ lies in $\coco_1$.} 
%
%The program 
%transforming the term into latex code was written by hand, but ultimately
%also this program should and could be extracted from a proof.

The above informally described interpretation of the elements of
$\coco_n$ as stream transformers is the extracted program of a special
case of Proposition~\ref{prop-comp-n} below which shows that the
predicates $\coco_n$ are closed under composition.
The following lemma is needed in its proof.
%
% Of course, one could prove closure under composition from
% the coincidence of $\coco_n$ with the u.\ c. functions
% and the fact that the latter are closed under composition,
% which is very easy to prove. However, the computational content
% of this proof involves composition of functions on the rationals
% which, as we will see in the case study on iterations of the logistic map,
% is computationally problematic, since the rational numbers appearing
% may become very big in terms of bit size (see \cite{Heckmann98}). 
% It is computationally better to
% prove closure under composition for $\coco_n$ directly, using induction
% and coinduction.
%
\begin{lem}
\label{lem-coco}
If $\coco_n(f)$, then $\coco_n(f\comp\avn{i}{d})$.
\end{lem}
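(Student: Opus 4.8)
The plan is to prove the set inclusion $X\tm\coco_n$, where
\[X:=\set{h\mid\ex{f}(\coco_n(f)\land h=f\comp\avn{i}{d})}\]
(with $i\in\NN_n$ and $d\in\SD$ ranging over all values), from which the lemma is immediate. Since $\coco_n=\gfp{\out_n}$, I would use \emph{strong coinduction}: it suffices to show $X\tm\out_n(X\cup\coco_n)$. So fix $h=f\comp\avn{i}{d}\in X$ with $\coco_n(f)$; by the fixed point property $\coco_n=\out_n(\coco_n)$ we have $f\in\out_n(\coco_n)$, and the goal reduces to $f\comp\avn{i}{d}\in\out_n(X\cup\coco_n)$.

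As $\out_n(\coco_n)=\lfpt{Y}{\outb_n(\coco_n)(Y)}$ is inductively defined, I would establish this by \emph{strong induction} on the derivation of $f\in\out_n(\coco_n)$, keeping $i,d$ fixed. Concretely, setting $Q:=\set{f'\mid f'\comp\avn{i}{d}\in\out_n(X\cup\coco_n)}$, it suffices to prove $\outb_n(\coco_n)(Q\cap\out_n(\coco_n))\tm Q$, splitting according to the two disjuncts of $\outb_n$. For the writing rule~(\ref{eq-outn-w}) there is $d'$ with $f'[\II^n]\tm\II_{d'}$ and $\coco_n(\va{d'}\comp f')$; writing $g':=f'\comp\avn{i}{d}$ and using $\avn{i}{d}[\II^n]\tm\II^n$ (because $\av{d}[\II]=\II_d\tm\II$) gives $g'[\II^n]\tm f'[\II^n]\tm\II_{d'}$, while $\va{d'}\comp g'=(\va{d'}\comp f')\comp\avn{i}{d}\in X$, so~(\ref{eq-outn-w}) yields $g'\in\out_n(X\cup\coco_n)$. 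For the reading rule~(\ref{eq-outn-r}) there is an index $j$ with $f'\comp\avn{j}{e}\in Q\cap\out_n(\coco_n)$ for all $e$; if $j\ne i$ then $\avn{i}{d}$ and $\avn{j}{e}$ act on distinct coordinates and commute, so $g'\comp\avn{j}{e}=(f'\comp\avn{j}{e})\comp\avn{i}{d}\in\out_n(X\cup\coco_n)$ by the $Q$-part of the hypothesis, and~(\ref{eq-outn-r}) with the same index $j$ gives $g'\in\out_n(X\cup\coco_n)$.

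The crux is the remaining case $j=i$, where the digit $d$ fed into the $i$-th argument is consumed by the first read on that argument. Here I would not use the reading rule at all: instantiating the strong induction hypothesis at $e=d$ gives $f'\comp\avn{i}{d}\in\out_n(\coco_n)$ \emph{directly}, and since $\coco_n\tm X\cup\coco_n$, monotonicity of $\out_n$ yields $g'=f'\comp\avn{i}{d}\in\out_n(X\cup\coco_n)$. This is exactly where the two strengthenings pay off. Plain induction would furnish only the doubly-fed functions $f'\comp\avn{i}{e}\comp\avn{i}{d}$, which are useless here, so it is essential to retain $f'\comp\avn{i}{d}\in\out_n(\coco_n)$ via strong induction; dually, strong coinduction is forced by the writing case, whose transition target $(\va{d'}\comp f')\comp\avn{i}{d}$ lies in $X$ but not \emph{a priori} in $\coco_n$. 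Computationally this matches the intended algorithm: prepending $d$ to the $i$-th input stream leaves all writing nodes in place and is merely pushed down through them across successive layers (the coinductive part), until it is absorbed at the first reading node labelled $i$ (the case $j=i$), after which the subtree is reused unchanged.
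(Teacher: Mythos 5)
Your proof is correct and follows essentially the same route as the paper's: strong coinduction on the set of functions $f\comp\avn{i}{d}$ with $\coco_n(f)$, then strong induction on $\out_n(\coco_n)$, with the identical three-way case analysis (writing node, reading node with $j\ne i$ via commutation of $\avn{i}{d}$ and $\avn{j}{e}$, and reading node with $j=i$ absorbed via the $\out_n(\coco_n)$-half of the strong induction hypothesis and monotonicity). The only cosmetic difference is that you let $i,d$ range over all values in the coinduction set where the paper fixes them in advance, which changes nothing in the argument.
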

\begin{proof}
We fix $i\in\set{1,\ldots,n}$ and $d\in\SD$ and set
\[D := \set{ f\comp\avn{i}{d} \mid \coco_n(f)}\]
We show $D\tm\coco_n$ by strong coinduction, 
i.\ e.\ we show $D\tm\out_n(D\cup\coco_n)$, i.\ e.\ 
$\coco_n\tm E$ where 
\[E:= \set{f \mid \out_n(D\cup\coco_n)(f\comp\avn{i}{d})}\]
Since $\coco_n=\out_n(\coco_n)$ it suffices to show $\out_n(\coco_n)\tm E$.
We prove this by strong induction on $\out_n(\coco_n)$, i.\ e.\ we show 
$\outb_n(\coco_n)(E\cap\out_n(\coco_n))\tm E$. 
Induction base: Assume $f[\II^n]\tm\II_{d'}$ and $\coco_n(\va{d'}\comp f)$. 
We need to show $E(f)$, i.\ e.\ $\out_n(D\cup\coco_n)(f\comp\avn{i}{d})$. 
By (\ref{eq-outn-w}) it suffices 
to show $(f\comp\avn{i}{d})[\II^n]\tm\II_{d'}$ and 
$(D\cup\coco_n)(\va{d'}\comp f\comp\avn{i}{d})$.
We have $(f\comp\avn{i}{d})[\II^n] = f[\avn{i}{d}[\II^n]] \tm f[\II^n] 
\tm\II_{d'}$. Furthermore, $D(\va{d'}\comp f\comp\avn{i}{d})$ 
holds by the assumption
$\coco_n(\va{d'}\comp f)$ and the definition of $D$.
Induction step: Assume, as strong induction hypothesis, 
$\bigwedge_{d'}(E\cap\out_n(\coco_n))(f\comp\avn{i'}{d'})$. 
We have to show $E(f)$, i.\ e.\ $\out_n(D\cup\coco_n)(f\comp\avn{i}{d})$.
If $i'=i$, then the strong induction hypothesis implies
$\out_n(\coco_n)(f\comp\avn{i}{d})$ which, by the monotonicity of
$\out_n$, in turn implies $\out_n(D\cup\coco_n)(f\comp\avn{i}{d})$.
If $i'\neq i$, then 
$\bigwedge_{d'}\avn{i'}{d'}\comp\avn{i}{d} = \avn{i}{d}\comp\avn{i'}{d'}$
and therefore, since the strong induction hypothesis implies 
$\bigwedge_{d'}E(f\comp\avn{i'}{d'})$, we have 
$\bigwedge_{d'}\out_n(D\cup\coco_n)(f\comp\avn{i}{d}\comp\avn{i'}{d'})$. 
By (\ref{eq-outn-r}) this implies
$\out_n(D\cup\coco_n)(f\comp\avn{i}{d})$.
\end{proof}

\begin{prop}
\label{prop-comp-n}
Consider $f\colon\II^n\to\RR$ and $g_i\colon\II^m\to\RR$, for $i=1,\ldots,n$.
If $\coco_n(f)$ and $\coco_m(g_1), \ldots, \coco_m(g_n)$, then 
$\coco_m(f\comp(g_1,\ldots,g_n))$.
\end{prop}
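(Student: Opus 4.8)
The plan is to prove the statement by \emph{coinduction}, bundling all composites into one set. Fix $m$ and set
\[\CCC := \set{f\comp(g_1,\ldots,g_n) \mid n\in\NN,\ \coco_n(f),\ \coco_m(g_1),\ldots,\coco_m(g_n)}.\]
Each $g_i$ maps $\II^m$ into $\II$ and $f$ maps $\II^n$ into $\II$ (by the remark that membership in a $\coco_k$ forces the range into some $\II_d\tm\II$), so every element of $\CCC$ is a well-defined member of $\funn{m}$. The proposition is the instance $f\comp(g_1,\ldots,g_n)\in\CCC\tm\coco_m$, so it suffices to show $\CCC\tm\coco_m$. By coinduction this reduces to the single coclosure condition
\[\CCC\tm\out_m(\CCC).\]

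To verify the coclosure, take $h=f\comp(g_1,\ldots,g_n)\in\CCC$. Using the fixed point identity $\coco_n=\out_n(\coco_n)=\lfpt{Y}{\outb_n(\coco_n)(Y)}$, I would view $f$ as a member of this \emph{inductively} defined set and prove, by induction on that structure,
\[P(f)\ \dequiv\ \all{g_1,\ldots,g_n}\bigl(\textstyle\bigwedge_i\coco_m(g_i)\imp\out_m(\CCC)(f\comp(g_1,\ldots,g_n))\bigr).\]
If $f$ is generated by the writing rule~(\ref{eq-outn-w}), with $f[\II^n]\tm\II_d$ and $\coco_n(\va{d}\comp f)$, then $h[\II^m]\tm f[\II^n]\tm\II_d$ and $\va{d}\comp h=(\va{d}\comp f)\comp(g_1,\ldots,g_n)\in\CCC$, so $h\in\out_m(\CCC)$ immediately by~(\ref{eq-outn-w}). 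This is the step at which the composite emits an output digit.

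The substantial case is when $f$ is generated by the reading rule~(\ref{eq-outn-r}) at an argument $i$, so that the induction hypothesis $P(f\comp\avn{i}{e})$ is available for every $e$. The composite cannot yet emit a digit; it must consult $g_i$. I would therefore run a nested induction on $g_i\in\out_m(\coco_m)=\lfpt{Y}{\outb_m(\coco_m)(Y)}$, with $f$ and $i$ fixed, proving
\[Q(g_i)\ \dequiv\ \all{(g_l)_{l\neq i}}\bigl(\textstyle\bigwedge_{l\neq i}\coco_m(g_l)\imp\out_m(\CCC)(f\comp(g_1,\ldots,g_n))\bigr).\]
If $g_i$ is generated by the writing rule, say $g_i[\II^m]\tm\II_e$ and $\coco_m(\va{e}\comp g_i)$, then from $g_i=\av{e}\comp(\va{e}\comp g_i)$ we obtain
\[f\comp(g_1,\ldots,g_i,\ldots,g_n)=(f\comp\avn{i}{e})\comp(g_1,\ldots,\va{e}\comp g_i,\ldots,g_n),\]
whose arguments all lie in $\coco_m$, so the \emph{outer} hypothesis $P(f\comp\avn{i}{e})$ applies and places $h$ in $\out_m(\CCC)$; this advances $f$ past its reading node, consuming the digit $e$ just produced by $g_i$. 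If instead $g_i$ is generated by the reading rule at an argument $j$, with inner hypothesis $Q(g_i\comp\avn{j}{e})$ for all $e$, I would emit a reading node for $j$: since
\[h\comp\avn{j}{e}=f\comp(g_1\comp\avn{j}{e},\ldots,g_n\comp\avn{j}{e})\]
and each $g_l\comp\avn{j}{e}\in\coco_m$ by Lemma~\ref{lem-coco}, applying $Q(g_i\comp\avn{j}{e})$ to the tuple $(g_l\comp\avn{j}{e})_{l\neq i}$ yields $h\comp\avn{j}{e}\in\out_m(\CCC)$ for every $e$, whence $h\in\out_m(\CCC)$ by~(\ref{eq-outn-r}).

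I expect the reading case to be the main obstacle. Two features make it delicate. First, reading the $j$-th input of the composite averages \emph{all} arguments $g_l$ at once, not only $g_i$; this forces the inner invariant $Q$ to quantify over the remaining arguments, and Lemma~\ref{lem-coco} is exactly what keeps those arguments inside $\coco_m$ after the averaging. Second, the two inductions must interleave correctly: a write by $g_i$ appeals to the outer hypothesis (on the smaller $f\comp\avn{i}{e}$), whereas a read by $g_i$ appeals to the inner hypothesis (on the smaller $g_i\comp\avn{j}{e}$). Both inductions are well founded because $f$ and $g_i$ live in the least-fixed-point layers $\out_n(\coco_n)$ and $\out_m(\coco_m)$, which guarantees that producing one output digit terminates. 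The case $n=0$ needs no separate argument: then $f$ has no reading nodes, only the writing case occurs, and the claim specialises to the composition of a point of $\coco_0$ with the empty tuple.
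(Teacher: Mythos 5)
Your proof is correct and follows essentially the same route as the paper's: a coinduction on $\out_m$ whose coclosure condition is established by an outer induction on $f$ viewed in $\out_n(\coco_n)=\lfpt{Y}{\outb_n(\coco_n)(Y)}$ (your $P$ is the paper's predicate $E$), with a nested side induction on $g_i\in\out_m(\coco_m)$ (your $Q$ is the paper's $F$), invoking Lemma~\ref{lem-coco} at exactly the same point, namely the reading case for $g_i$. The only differences are cosmetic: your class $\CCC$ bundles all arities $n$ at once, and your $Q$ quantifies over the remaining arguments $(g_l)_{l\neq i}$ instead of carrying the constraint $g=g_i$ over the whole vector.
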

\begin{proof}
We prove the proposition by coinduction, i.\ e. we set
\[D := \set{f\comp(g_1,\ldots,g_n) \mid  
\coco_n(f),\ \coco_m(g_1),\ \ldots,\ \coco_m(g_n)}\]
and show that $D\tm\out_m(D)$, i.\ e. $\coco_n\tm E$ where
\[E := \set{f\in\funn{n}\mid\all{\vec g}(\coco_m(\vec g)\imp
\out_m(D)(f\comp\vec g))}\]
and $\coco_m(\vec g) := \coco_m(g_1)\land \ldots \land \coco_m(g_n)$. 
Since $\coco_n=\out_n(\coco_n)$ it suffices to show
$\out_n(\coco_n)\tm E$.
We do an induction on $\out_n(\coco_n)$,
i.\ e.\ we show $\outb_n(\coco_n)(E)\tm E$.
Induction base: Assume $f[\II^n]\tm\II_d$, $\coco_n(\va{d}\comp f)$ 
and $\coco_m(\vec g)$.
We have to show $\out_m(D)(f\comp\vec g))$. By (\ref{eq-outn-w})
it suffices to show 
$(f\comp \vec g)[\II^m]\tm\II_d$
and
$D(\va{d}\comp f\comp\vec g)$. 
The first statement holds since $\vec g[\II^m]\tm\II$, the second
holds by the definition of $D$ and the assumption.
Induction step: Assume, as induction hypothesis, 
$\bigwedge_{d}E(f\comp\avn{i}{d})$. We have to show $E(f)$ , i.\ e. 
$\coco_m\tm F$ where
\[F := \{g\in\funn{m}\mid\all{\vec g}(g=g_i\land
\coco_m(\vec g)\imp\out_m(D)(f\comp\vec g))\}\]
Since $\coco_m\tm\out_m(\coco_m)$ it suffices to show
$\out_m(\coco_m)\tm F$ which we do by a side induction on $\out_m$,
i.\ e.\ we show $\outb_m(\coco_m)(F)\tm F$.
Side induction base: 
Assume $g[\II^m]\tm\II_d$ and 
$\coco_m(\va{d}\comp g)$ and $\coco_m(\vec g)$
where $g=g_i$. We have to show
$\out_m(D)(f\comp\vec g)$.
Let $\vec g'$ be obtained from $\vec g$ by replacing
$g_i$ with $\va{d}\comp g$. Since $\coco_m(\vec g')$, we have 
$\out_m(D)(f \comp\, \avn{i}{d}\comp\vec g')$, 
by the main induction hypothesis. 
But $\avn{i}{d}\comp\vec g' = \vec g$.
Side induction step: Assume $\bigwedge_{d}F(g\comp\avn{j}{d})$ 
(side induction hypothesis).
We have to show $F(g)$. Assume $\coco_m(\vec g)$
where $g=g_i$. We have to show $\out_m(D)(f\comp\vec g)$.
By (\ref{eq-outn-r}) it suffices to show 
$\out_m(D)(f\comp\vec g\comp\avn{j}{d})$ for all $d$.
Since the $i$-th element of $\vec g\comp\avn{j}{d}$ is
$g\comp\avn{j}{d}$ and, by Lemma~\ref{lem-coco},
$\coco_m(\vec g\comp\avn{j}{d})$, we can apply the 
side induction hypothesis.
\end{proof}
The program extracted from Prop.~\ref{prop-comp-n} 
composes trees.
The cases $m=0$ and $n=1$ are of particular interest.
If $m=0$, then the program interprets 
a tree in $\coco_n$ as an $n$-ary stream transformer.
In the proof the functions $\vec g$ are then just real numbers, and 
composition, $f \comp \vec g$, becomes function application, $f(\vec g)$.
Furthermore, the side induction step disappears.
If $n=1$, then the vectors $\vec g$ consist of only one function $g$
and $F$ simplifies to
$\{g\in\funn{m}\mid \out_m(D)(f\comp g)\}$.
Furthermore, the side induction step does not need Lemma~\ref{lem-coco}
anymore and becomes almost trivial.
We show the programs for the cases $n=1, m=0$ and $n=m=1$.
We use the following auxiliary programs extracted from a proof of the formula
$(X(-1)\land X(0)\land X(1))\bimp \all{d}X(d)$.
% and
%$X\tm Y \imp (X(-1)\land X(0)\land X(1)) \imp (Y(-1)\land Y(0)\land Y(1))$.
%
\begin{code}
type Triple alpha = (alpha,alpha,alpha)

appTriple :: Triple alpha -> SD -> alpha
appTriple (xN,xZ,xP) d = case d of {N -> xN ; Z -> xZ ; P -> xP}

abstTriple :: (SD -> alpha) -> Triple alpha
abstTriple f = (f N, f Z, f P)
\end{code}
%mapTriple :: (alpha -> beta) -> Triple alpha -> Triple beta
%mapTriple f (xN,xZ,xP) = (f xN,f xZ,f xP)
%
The data types associated with the operators $\outb_1$, $\out_1$ 
and the predicate $\coco_1$ as well as their associated map functions 
and (co)iterators are
\begin{code}
data K1 alpha beta = W1 SD alpha | R1 (Triple beta)
                       
mapK1 :: (alpha -> alpha') -> (beta -> beta') -> 
         K1 alpha beta -> K1 alpha' beta'
mapK1 f g (W1 d a) = W1 d (f a)
mapK1 f g (R1 (bN,bZ,bP)) = R1 (g bN,g bZ,g bP)

data J1 alpha = ConsJ1 (K1 alpha (J1 alpha))   -- data
                       
itJ1 :: (K1 alpha beta -> beta) -> J1 alpha -> beta 
itJ1 s (ConsJ1 z) = s (mapK1 id (itJ1 s) z)

mapJ1 :: (alpha -> alpha') -> J1 alpha -> J1 alpha'
mapJ1 f (ConsJ1 x) = ConsJ1 (mapK1 f (mapJ1 f) x)

data C1 = ConsC1 (J1 C1)      -- codata
 
coitC1 :: (alpha -> J1 alpha) -> alpha -> C1 
coitC1 s x = ConsC1 (mapJ1 (coitC1 s) (s x))
\end{code}
Now, the extracted programs of Proposition~\ref{prop-comp-n}.
Case $n=1, m=0$:
\begin{code}
appC :: C1 -> C0 -> C0
appC c ds = coitC0 costep (c,ds)  where 

   costep :: (C1,C0) -> J0 (C1,C0)
   costep (ConsC1 x,ds) = aux x ds  

   aux :: J1 C1 -> C0 -> J0 (C1,C0) 
   aux = itJ1 step

   step :: K1 C1 (C0 -> J0 (C1,C0)) -> C0 -> J0 (C1,C0)
   step (W1 d c') ds                = (d,(c',ds))
   step (R1 es)   (ConsC0 (d0,ds')) = appTriple es d0 ds'
\end{code}
Case $n=m=1$:
\begin{code}
compC1 :: C1 -> C1 -> C1
compC1 c1 c2 = coitC1 costep (c1,c2)  where 

   costep :: (C1,C1) -> J1 (C1,C1)
   costep (ConsC1 x1,c2) = aux x1 c2  

   aux :: J1 C1 -> C1 -> J1(C1,C1) 
   aux = itJ1 step

   step :: K1 C1 (C1 -> J1 (C1,C1)) -> C1 -> J1 (C1,C1)
   step (W1 d1 c1') c2          = ConsJ1 (W1 d1 (c1',c2))
   step (R1 es)     (ConsC1 x2) = subaux x2  where

        subaux :: J1 C1 -> J1 (C1,C1)
        subaux = itJ1 substep

        substep :: K1 C1 (J1 (C1,C1)) -> J1 (C1,C1)
        substep (W1 d2 c2') = appTriple es d2 c2'
        substep (R1 fs)     = ConsJ1 (R1 fs)
\end{code}
\paragraph{Remark.}
The cases shown above are also treated in \cite{Ghanietal09}
(without application to exact real number computation).
Whereas in \cite{Ghanietal09} the program was `guessed' and then verified,
we are able to extract the program from a proof making verification unnecessary.
Of course, one could reduce Proposition~\ref{prop-comp-n} to the
case $m=n=1$, by coding $n$ streams of single digits into one 
stream of $n$-tuples of digits.
But this would lead to less efficient programs, since it would mean
that in each reading step \emph{all} inputs are read, even those 
that might not be needed (for example, the function 
$f(x,y) = x/2 + y/100$ certainly should read $x$ more often than $y$).

%For the extracted programs, which are not shown here due 
%to lack of space, we refer the reader to~\cite{Berger09} (or
%\cite{Ghanietal09} for the case $m=n=1$).

\paragraph{Remark.}
Note that the realizability relation connecting real functions satisfying 
$\coco_1$ and trees in the type \verb|C1| is much less tight than it was 
in the case of natural numbers (where realizability provided 
a one-to-one correspondence between real numbers satisfying the 
predicate $\NN$ and elements of the type \verb|Nat|). Although, by 
coincidence, every element of the type \verb|C1| defines, via the 
program \verb|appC| a stream transformer, this stream transformer will in 
general not correspond to a real function, i.\ e.\ it will not necessarily
respect equality of reals represented by signed digit streams. The latter 
is the case only if the tree happens to realize a function $f$ 
(which is of course the case if the tree was extracted from a proof 
of $\coco_1(f)$). Moreover, a tree can realize $\coco_1(f)$ for 
different $f$ because the predicate $\coco_1$ says nothing about the 
behaviour of functions outside the interval $\II$.

In order to try out the programs \verb|appC| and \verb|compC1| one needs
examples of elements of the type \verb|C1|. Such examples will be
provided in the next section.

\section{Wellfounded induction and digital systems}
\label{sec-digit}
Now we study the principle of induction along a wellfounded relation
from the perspective of program extraction. As an important application
%of wellfounded induction 
we show that certain families of real functions
which we call digital systems are contained in $\coco_n$.
This provides a convenient tool for proving that certain functions,
for example polynomials and, more generally, uniformly continuous
functions on $\II^n$ are in $\coco_n$, and in turn allows us to extract 
implementations for these functions.

\medskip

\noindent
{\bf Wellfounded induction}\quad
Let $U$ be a set, $A$ a subset of $U$ and $<$ a binary relation on $U$. 
Define a monotone operator $\Phi : \pow{U}\to\pow{U}$ 
(depending on $A$ and $<$) by 
\[\Phi(X) := \set{x\mid\all y\in A(y<x \imp y\in X)}\]
The relation $<$ is called \emph{wellfounded} on $A$, $\Wf_A(<)$,  
if $A\tm\lfp{\Phi}$. A set $X\tm U$ is called \emph{$<$-progressive} 
on $A$, $\Prog_A(<,X)$, if $\Phi(X)\cap A\tm X$.
The principle of \emph{wellfounded induction (on $A$ along $<$ at $X$)\/},
$\WfInd_A(<,X)$, is
\[\Prog_A(<,X) \imp A\tm X \]
For the purpose of program extraction let us assume that the partial 
order $x<y$ is defined without using disjunctions and hence has no 
computational content. For example, the definition could be 
an equation $t(x,y)=0$ for some term $t(x,y)$.
The following program realizes wellfounded induction for provably
wellfounded relations (\verb|alpha| and \verb|beta| are the 
types of realizers of $A$ and $X$, respectively):
\begin{code}
wfrec :: ((alpha -> beta) -> alpha -> beta) -> alpha -> beta
wfrec prog = h  where  h  = prog h
\end{code}
\begin{prop}
\label{prop-wf}
If $\Wf_A(<)$ is provable, then \verb|wfrec| realizes $\WfInd_A(<,X)$.
\end{prop}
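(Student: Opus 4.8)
The plan is to unfold the realizability of $\WfInd_A(<,X)$ and then argue by induction on the inductively defined predicate $\lfp{\Phi}$ that underlies wellfoundedness. By definition a realizer of $\WfInd_A(<,X)$, i.e.\ of $\Prog_A(<,X)\imp A\tm X$, is a function sending any realizer of $\Prog_A(<,X)$ to a realizer of $A\tm X$. Since $<$ is disjunction-free and hence carries void computational content, the type of $\Prog_A(<,X)$ is $(\alpha\to\beta)\to\alpha\to\beta$ and that of $A\tm X$ is $\alpha\to\beta$, exactly matching the type of \verb|wfrec|. So I would fix a realizer $p$ of $\Prog_A(<,X)$ (the program \verb|prog|) and put $h := \mathtt{wfrec}\,p$, which by definition is the least fixed point of $p$ and satisfies the equation $h = p\,h$. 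The goal is to show that $h$ realizes $A\tm X$, that is, that for every $x$ and every $a$ with $\ire{a}{A(x)}$ the value $h\,a$ satisfies $\ire{(h\,a)}{X(x)}$.

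First I would make the two hypotheses concrete. Unfolding, $p$ realizes $\all{x}((\Phi(X)(x)\land A(x))\imp X(x))$, so for any $x$, any $\phi$ with $\ire{\phi}{\Phi(X)(x)}$ and any $a$ with $\ire{a}{A(x)}$ we obtain $\ire{(p\,\phi\,a)}{X(x)}$. A realizer of $\Phi(X)(x)$, i.e.\ of $\all{y}(A(y)\land y<x\imp X(y))$, is (after deleting the void component $y<x$) a map that, for every $y$ with $A(y)$ and $y<x$, sends any realizer of $A(y)$ to a realizer of $X(y)$. Moreover, since $\Wf_A(<)$ is provable, by the Soundness Theorem it is \emph{true}, so $A\tm\lfp{\Phi}$ holds and the induction principle for $\lfp{\Phi}$ is at our disposal.

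The heart of the argument is this induction. I would introduce the predicate
\[ Q := \set{x \mid \all{a}(\ire{a}{A(x)}\imp \ire{(h\,a)}{X(x)})} \]
so that the desired $A\tm Q$ reduces, via $A\tm\lfp{\Phi}$, to $\lfp{\Phi}\tm Q$, which I prove by establishing $\Phi(Q)\tm Q$. Assume $x\in\Phi(Q)$; unfolding, this says that for all $y$ with $A(y)$ and $y<x$ and all $b$ with $\ire{b}{A(y)}$ we have $\ire{(h\,b)}{X(y)}$. But this is precisely the statement that $h$ realizes $\Phi(X)(x)$. Hence, for any $a$ with $\ire{a}{A(x)}$, applying the realizability property of $p$ with $\phi := h$ yields $\ire{(p\,h\,a)}{X(x)}$, and the fixed-point equation $h = p\,h$ gives $h\,a = p\,h\,a$, so $\ire{(h\,a)}{X(x)}$. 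Thus $x\in Q$, the induction closes, and $A\tm Q$ follows.

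The step deserving the most care is the interaction between the general recursion $h = p\,h$ and the totality of the extracted realizer. A priori $h$ is merely the least fixed point of $p$ in the domain $D$ and could return $\bot$ on some arguments; what the induction on $\lfp{\Phi}$ really establishes is that along every $<$-descending chain of recursive calls the unfolding of $p$ terminates after finitely many steps, so that $h\,a$ is a genuine, total realizer of $X(x)$ for each $x\in A\tm\lfp{\Phi}$. In short, wellfoundedness is exactly what turns the unrestricted fixed point into a terminating, correct realizer, which also explains why the truth (provability) of $\Wf_A(<)$ must be invoked in the soundness argument even though it contributes nothing to the realizer itself.
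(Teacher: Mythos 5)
First, a caveat on the comparison: the paper does \emph{not} prove this proposition --- it explicitly defers the proof to a subsequent publication --- so there is no official argument to measure yours against, and what follows assesses your proposal on its own terms. Your overall strategy (the type analysis, taking $h := \verb|wfrec|\,p$ with the fixed-point equation $h = p\,h$, and an induction along the wellfoundedness to show that $h$ realizes $A\tm X$) is the natural one. But there is a genuine gap: you systematically conflate \emph{truth} of $A(y)$ with \emph{realizability} of $A(y)$, and in this framework these come apart. You use provability of $\Wf_A(<)$ only to conclude that it is true, i.e.\ that $A\tm\lfp{\Phi}$ holds semantically, and you then induct on the semantic set $\lfp{\Phi}$. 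In the induction step you claim that $x\in\Phi(Q)$ ``is precisely the statement that $h$ realizes $\Phi(X)(x)$''. It is not: for $h$ to realize $\Phi(X)(x)$ one needs $\ire{(h\,b)}{X(y)}$ for \emph{every} $y<x$ for which $A(y)$ merely has a realizer $b$, whereas $x\in\Phi(Q)$ gives this only for those $y<x$ for which $A(y)$ is \emph{true}. Since the realizability relation attached to $A$ (the type $\alpha$) is essentially arbitrary, and since even for defined predicates realizability does not entail truth in this setting (an implication whose premise is true but unrealizable --- e.g.\ a uniformly quantified instance of excluded middle --- and whose conclusion is false is vacuously realized by anything, even $\bot$), there may be $y<x$ with $A(y)$ realizable but false; your induction hypothesis says nothing about such $y$, so you cannot feed $h$ to $p$. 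The same mismatch recurs at the end: $A\tm Q$ yields the conclusion only for $x$ with $A(x)$ true, while a realizer of $A\tm X$ must cover every $x$ with $A(x)$ realizable.

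The repair also shows where \emph{provability} (rather than mere truth) of $\Wf_A(<)$ genuinely enters. By the Soundness Theorem, the given proof of $\Wf_A(<)$ yields a realizer $w$ of $\all{x}(A(x)\imp\lfp{\Phi}(x))$. Now recall that realizability of the inductive predicate $\lfp{\Phi}$ is itself an inductively defined relation, whose single clause (modulo the constructor) reads: if $\ire{(r\,b)}{\lfp{\Phi}(y)}$ for all $y$ and $b$ with $\ire{b}{A(y)}$ and $y<x$, then $r$ realizes $\lfp{\Phi}(x)$. Induct on \emph{this} relation, replacing your $Q$ by the set of pairs $(r,x)$ such that $\all{a}(\ire{a}{A(x)}\imp\ire{(h\,a)}{X(x)})$. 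The induction hypothesis then quantifies exactly over the realizable instances of $A$, so it literally \emph{is} the statement that $h$ realizes $\Phi(X)(x)$, and your fixed-point step $h\,a = p\,h\,a$ closes the argument. Finally, for any $x$ and $a$ with $\ire{a}{A(x)}$, apply $w$ to get $\ire{(w\,a)}{\lfp{\Phi}(x)}$ and hence $\ire{(h\,a)}{X(x)}$. Note that $w$ is used only in this correctness argument and never occurs in $h$ itself --- which is precisely the point of the remark following the proposition, and which your closing paragraph gestures at but does not actually secure.
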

The proof of Prop.~\ref{prop-wf} is beyond the scope of this introductory
paper and will be given in a subsequent publication.
\paragraph{Remark.} 
One can easily prove $\Wf_A(<) \imp \WfInd_A(<,X)$
from the induction principle for $\lfp{\Phi}$ and extract
a program computing a realizer of 
$\WfInd_A(<,X)$ from a realizer of $\Wf_A(<)$. The point is, that our 
realizer of $\WfInd_A(<,X)$ does not depend on a realizer of $\Wf_A(<)$.
\paragraph{Remark.} In~\cite{Schwichtenberg08a} a Dialectica Interpretation of
a different form of wellfounded induction is given. There, the realizing 
program refers to a decision procedure for the given wellfounded relation. 

\medskip

\noindent
{\bf Digital systems}\quad
Let $(A,<)$ be a provably wellfounded relation.
A \emph{digital system} is a family 
$\FFF = (f_x : \II^n\to\II)_{x\in A}$
such that for all $x\in A$
%
%\footnote{contrary to earlier conventions we use in in the following the 
%letters $\alpha,\beta$ for elements of an arbitrary wellfounded set and 
%the letters \texttt{a}, \texttt{b}, \texttt{c} to denote type variables.} 
%
%
\[      \ex{d}(f_x[\II^n]\tm\II_d\land 
        \ex{y\in A} f_y = \va{d}\comp f_x)
  \lor  
        \ex{i}\bigwedge_{d}\ex{y\in A} (y<x \land f_y = f_x\comp\avn{i}{d})
\]
When convenient we identify the family $\FFF$ with the set 
$\set{f_x \mid x\in A}$.
\paragraph{Remark.} The definition of a digital system makes reference
to the (undecidable) equality relation between real functions. This is 
not a problem because, as explained in Section~\ref{sec-ind-coind}, 
it is not necessary for the mathematical objects and predicates to be
constructively given. It is enough to be able to formulate the necessary
axioms without using disjunctions (which is the case for the usual 
axioms for equality between functions).
\begin{prop}
\label{prop-pot}
If $\FFF$ is a digital system, then $\FFF\tm\coco_n$.
\end{prop}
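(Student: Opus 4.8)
The plan is to prove $\FFF\tm\coco_n$ by coinduction. Since $\coco_n=\gfp{\out_n}$, the coinduction principle reduces the goal to the single inclusion $\FFF\tm\out_n(\FFF)$; unfolding this, it suffices to establish $\out_n(\FFF)(f_x)$ for every $x\in A$. The key observation is that the defining disjunction of a digital system matches, almost literally, the two closure rules~(\ref{eq-outn-w}) and~(\ref{eq-outn-r}) of the inductively defined $\out_n(\FFF)$. Hence no separate induction on the structure of $\out_n$ is needed: the only genuine recursion is the descent along $<$ occurring in the reading case, which I would discharge by wellfounded induction on $A$. Note also that plain coinduction (rather than strong coinduction, as in Proposition~\ref{prop-comp-n}) will do, because in the writing case we land back in $\FFF$ itself.

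Concretely, I would set $X:=\set{x\mid\out_n(\FFF)(f_x)}$ and prove $A\tm X$ by $\WfInd_A(<,X)$. By Proposition~\ref{prop-wf} it then suffices to show that $X$ is $<$-progressive on $A$, i.e.\ that for every $x\in A$, under the induction hypothesis $\all{y\in A}(y<x\imp\out_n(\FFF)(f_y))$, one has $\out_n(\FFF)(f_x)$. Invoking the digital-system property at $x$, I distinguish the two disjuncts. In the \emph{writing} case there are $d$ and $y\in A$ with $f_x[\II^n]\tm\II_d$ and $\va{d}\comp f_x=f_y$; since $f_y\in\FFF$ we have $\FFF(\va{d}\comp f_x)$, so rule~(\ref{eq-outn-w}) yields $\out_n(\FFF)(f_x)$ directly --- this case uses no induction hypothesis, reflecting that a writing node re-enters the coinductively defined set. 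In the \emph{reading} case there is an index $i$ such that for each $d$ there is $y_d\in A$ with $y_d<x$ and $f_{y_d}=f_x\comp\avn{i}{d}$; the induction hypothesis gives $\out_n(\FFF)(f_{y_d})$, that is $\out_n(\FFF)(f_x\comp\avn{i}{d})$, for all $d$, and rule~(\ref{eq-outn-r}) then delivers $\out_n(\FFF)(f_x)$. This establishes progressivity, hence $A\tm X$, hence $\FFF\tm\out_n(\FFF)$, and coinduction concludes $\FFF\tm\coco_n$.

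The conceptually delicate point --- and the step I would be most careful to get right --- is the correct \emph{nesting} of the two fixed-point principles: the outer coinduction for the greatest fixed point $\coco_n=\gfp{\out_n}$, into which every writing step steps back, combined with an orthogonal wellfounded induction along $<$ that justifies the descent performed at reading nodes. Once one notices that the inner least fixed point $\out_n(\FFF)$ is generated by exactly the two disjuncts of the digital-system condition, both applications of the closure rules become immediate and the argument collapses to the short case analysis above; the real combinatorial work has been packaged into the hypothesis $\Wf_A(<)$ and Proposition~\ref{prop-wf}. Computationally this is precisely the pairing of the coiterator for $\coco_n$ with the recursor \verb|wfrec|, which is what lets the extracted program build the representing trees of the kind shown in Fig.~\ref{fig-tree} directly from the wellfounded recursion underlying $\FFF$.
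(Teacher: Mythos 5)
Your proposal is correct and follows essentially the same route as the paper's proof: coinduction reduces the goal to $\out_n(\FFF)(f_x)$ for all $x\in A$, which is then discharged by wellfounded $<$-induction using the match between the digital-system disjunction and the rules~(\ref{eq-outn-w}) and~(\ref{eq-outn-r}). The paper compresses this into one sentence (``this follows immediately by wellfounded $<$-induction''), so your write-up is simply a faithful elaboration of the same argument, including the correct observation that plain (not strong) coinduction suffices and that the extracted program pairs the coiterator with \verb|wfrec|.
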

\begin{proof}
Let $\FFF$ be a digital system.
We show $\FFF\tm\coco_n$ by coinduction. Hence,  
we have to show $\out_n(\FFF)(f_x)$ for all $x\in A$. 
But, looking at the definition of $\out_n(\FFF)$ and the 
properties of a digital system, this follows immediately by 
wellfounded $<$-induction on $x$. 
\end{proof}
We can extract a program from the proof of Prop.~\ref{prop-pot} that
transforms a (realization of) a digital system into a family of trees
realizing its members (case $n=1$):
\begin{code}
digitsys1 :: (alpha -> Either (SD,alpha) (Triple alpha)) 
             -> alpha -> C1
digitsys1 s = coitC1 (wfrec prog)  where

-- prog :: (alpha -> J1 alpha) -> alpha -> J1 alpha
   prog ih x =
     case s x of
      {Left (d,a)      -> ConsJ1 (W1 d a) ;
       Right (aN,aZ,aP) -> ConsJ1 (R1 (ih aN, ih aZ, ih aP))}
\end{code}

\begin{exa}[{\bf linear affine functions}]
For $\vec u,v \in \QQ^{n+1}$ define 
$\la{\vec u,v}\colon \II^n\to\RR$ by 
\[\la{\vec u,v}(\vec x) := u_1 x_1 + \ldots + u_n x_n + v\]
Clearly, $f_{\vec u,v}[\II^n] = [v-|\vec u|,v+|\vec u|]$ 
where $|\vec u|  := |u_1|+\ldots+|u_n|$.
Hence $f_{\vec u,v}[\II^n]\tm\II$ iff $|\vec u|+|v|\le 1$,
and if $|\vec u|\le 1/4$, then $f_{\vec u,v}[\II^n]\tm\II_d$ for some $d$.
Furthermore, $f_{\vec u,v}\comp\avn{i}{d} = f_{\vec u',v'}$ where
$\vec u'$ is like $\vec u$ except that the $i$-th component is halved and
$v' = v + u_id/2$. 
Hence, if $i$ was chosen such that $|u_i|\ge |\vec u|/n$,
then $|\vec u'| \le q |\vec u|$ where $q := 1-1/(2n) <1$.
Therefore, we set $A := \set{\vec u,v\in\QQ^{n+1} \mid |\vec u|+|v|\le 1}$
and define a wellfounded relation $<$ on $A$ by
\[\vec u',v' < \vec u,v \quad :\bimp\quad 
  |\vec u| \ge 1/4 \land |\vec u'| \le q |\vec u| \]
From the above it follows that
$\poly{1,n} := (\la{\vec u,v})_{\vec u,v\in A}$
is a digital system.                                                         
Hence $\poly{1,n}\tm\coco_n$, by                                             
Proposition~\ref{prop-pot}. Program extraction gives us a                    
program that assigns to each tuple of rationals                              
$\vec u,w\in A$ a tree representation of $\la{\vec u,w}$.                   
Here is the program for the case $n=1$:                                      
\begin{code}
type Rat2 = (Rational,Rational)                                            
                                                                             
linC1 :: Rat2 -> C1                                                        
linC1 = digitsys1 s  where                                                 
                                                                             
  s :: Rat2 -> Either (SD,Rat2) (Triple Rat2)                              
  s (u,v) = if abs u <= 1/4
            then let e = if v < -(1/4) then N else
                         if v > 1/4    then P 
                                       else Z
                 in Left (e,(2*u,2*v-fromSD e))                            
            else Right (abstTriple (\d -> (u/2,u*fromSD d/2+v)))           
\end{code}                                                               
In order to try this program out we introduce a utility function
that applies a function $f : \mathrm{C0}\to\mathrm{C0}$ to the
signed digit representation of a rational number $q$ and computes the
result with precision $2^{-n}$ as a rational number. 
\begin{code}
runC :: (C0 -> C0) -> Rational -> Integer -> Rational
runC f q n = sd2cauchy (f (cauchy2sd (const q))) (iN n)
\end{code}
Now we can compute, for example, the tree representation of the             
function $f(x) = \frac{1}{4}x +\frac{1}{5}$                                  
at the signed digit representation of the point $x=\frac{1}{3}$ 
with an accuracy  of $2^{-10}$ by defining 
\begin{code}
f :: C0 -> C0
f = appC (linC1 (1/4,1/5))
\end{code}
and evaluating the expression     
\verb| runC f (1/3) 10 |.
The computed result, $\frac{145}{512}$, differs from the exact result, 
$\frac{1}{4}x +\frac{1}{5} = \frac{17}{60}$, by $\frac{1}{7680} < 2^{-10}$, 
as required.
\end{exa}
                                                                             
\begin{rem} In \cite{Konecny04} it is shown                          
that the linear affine transformations are exactly the functions             
that can be represented by a finite automaton.                               
The trees computed by our program generate these automata,                   
simply because for the computation of the tree for                           
$\la{\vec u,v}$ only finitely many other indices                             
$\vec u',v'$ are used, and Haskell will construct the tree                   
by connecting these indices by pointers.                                     
\end{rem}                                                                             
\begin{exa}[{\bf iterated logistic map}]
With a similar proof as for the linear affine maps one                       
can show that all polynomials of degree 2 with rational                      
coefficients mapping $\II$ to $\II$ are in $\coco_1$.
The following program can be extracted. It takes three
rational numbers $u,v,w$ and computes a tree representation
of the function $f_{u,v,w}(x) := ux^2+vx + w$, provided
$f_{u,v,w}$ maps $\II$ to $\II$.
The programs \verb|quadWrite| and \verb|quadRead| 
compute the coefficients of the quadratic functions
$\va{e} \comp f_{u,v,w}$ and $f_{u,v,w} \comp\av{d}$ while
\verb|quadTest| tests whether $f_{u,v,w}[\II]\tm\II_d$.
Since a quadratic function may or may not have an extremal point in the 
interval $\II$ this test is more complicated than in the linear affine case.
\begin{code}
type Rat3 = (Rational,Rational,Rational)

quadC1 :: Rat3 -> C1
quadC1 = digitsys1 s  where

 s :: Rat3 -> Either (SD,Rat3) (Triple Rat3)
 s uvw = case (filter (quadTest uvw) [N,Z,P]) of
                (e:_)  -> Left (e,quadWrite uvw e)
                []     -> Right (abstTriple (quadRead uvw))

 quadWrite :: Rat3 -> SD -> Rat3
 quadWrite (u,v,w) e = (2*u , 2*v , 2*w - e')  
   where e' = fromSD e

 quadRead :: Rat3 -> SD -> Rat3
 quadRead (u,v,w) d = (u/4 , (u*d'+v)/2 , u*d'^2/4 + v*d'/2 + w)  
   where d' = fromSD d

 quadTest :: Rat3 -> SD -> Bool
 quadTest (u,v,w) e = (e'-1)/2 <= low && high <= (e'+1)/2
   where
    e'   = fromSD e  
    low  = minimum crit            -- min (f_uvw I)
    high = maximum crit            -- max (f_uvw I)
    crit = [ u+v+w, u-v+w] ++      -- [f_uvw 1, f_uvw (-1)] 
           (if u == 0 then []
            else let x = -v/(2*u)          -- extremal point
                 in if -1 <= x && x <= 1
                    then [u*x^2 + v*x + w] -- f_uvw x
                    else [])       
\end{code}
In particular the so-called logistic map (transformed to $\II$),             
defined by 
\[f_a(x) = a(1 - x^2) - 1,\]
is in $\coco_1$ for each rational number $a\in[0,2]$.                        
\begin{code}
lmaC1 :: Rational -> C1
lmaC1 a = quadC1 (-a,0,a-1)
\end{code}
Exact computation of iterations of the logistic map 
%
%on $[0,1]$ 
%
were studied in~\cite{Blanck05} and \cite{Plume98}. 
In order to test the performance of our implementation
with these maps we use a generalized exponentiation function that raises
a value $x$ to the power $n$ ($>0$) with respect to 
an arbitrary binary function $g$ 
as ``multiplication'':
\begin{code}
gexp :: (alpha -> alpha -> alpha) -> alpha -> Int -> alpha
gexp g x 1 = x
gexp g x (n+1) = g (gexp g x n) x
\end{code}                                   
Now we define a tree representing the 100-fold iteration of the logistic
map $f_2$
\begin{code}
t100 :: C1
t100 = gexp compC1 t1 100  where  t1 = lmaC1 2 
\end{code}
and evaluate
\verb| runC (appC t100) 0.7 100 |
which means we compute $f_{2}^{100}(0.7)$ 
with a precision of $2^{-100}$. The result, 
\[
\frac{1008550774065780194036545699607}{1267650600228229401496703205376}
\]
(which is approximately $0.7956062765908836$)
is computed within a few seconds. 
Regarding efficiency, in general our experimental results compare
well with those in \cite{Plume98} which are based on the binary signed
digit representation as well.
In addition, when one repeats the evaluation of the expression
\verb| (appC t100) 0.7 100| the result is computed instantly because
the relevant branch of the tree \verb|t100| has been computed before
and is now memoized. The memoization effect is still noticeable if
one slightly changes the iteration index or the argument \verb|x|.
Note that the function $f_2^{100}$ is a polynomial of degree $2^{100}$ which
oscillates about $2^{100}$ times
in the interval $\II$, and the exact value of $f_{2}^{100}(0.7)$
is a rational number which has a (bit-)size $>2^{100}$. 
Computing $f_{2}^{100}(0.7)$ using double precision 
floating point arithmetic yields the completely wrong value
$-0.1571454279758806$
(evaluate \verb|gexp (.) (\x-> 2*(1-x^2)-1) 100 0.7 :: Double|). 
\end{exa}

%
% in fact, already exponents of around $30$ yield completely wrong results.
%
In~\cite{Blanck05} much higher iterations of logistic
maps where computed (up to $n = 100,000$) by exploiting specific
information about these functions to fine-tune the program.  Our
program, however, was extracted from completely general proofs about
polynomials and composability of arbitrary u.\ c. functions.

\medskip

An important application of digital systems is the following proof that      
the predicate $\coco_n$ precisely captures uniform continuity.               
We work with the maximum norm on $\II^n$ and set                             
$\ball{\delta}{\vec p} := \set{\vec x\in \II^n\mid                           
|\vec x-\vec p|\le\delta}$                                                   
for $\vec p\in\II^n$.                                                        
We also set $Q := \II\cap\QQ$ and let $\delta,\epsilon$ range over positive  
rational numbers. Furthermore, we set                                        
\[\bx(\delta,\epsilon,f) :\bimp \all{\vec p\in Q^n}\ex{q\in Q}               
                       (f[\ball{\delta}{\vec p}]\tm\ball{\epsilon}{q})\]     
It is easy to see that $f\colon\II^n\to\RR$ is uniformly                     
continuous with $f[\II^n]\tm\II$ iff                                         
\begin{equation}                                                             
\label{eq-uc-def}                                                            
\all{\epsilon}\ex{\delta}\bx(\delta,\epsilon,f)                              
\end{equation}                                                               
\begin{prop}
\label{prop-cont-uc}
For any function $f\colon\II^n\to\RR$,
$\coco_n(f)$ iff $f$ is uniformly continuous and $f[\II^n]\tm\II$.
\end{prop}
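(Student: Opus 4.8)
The plan is to prove the two implications separately: the forward implication $\coco_n(f)\imp(\text{u.c.}\land f[\II^n]\tm\II)$ by exploiting the inductive structure of $\out_n$, and the converse by building a digital system and quoting Proposition~\ref{prop-pot}.

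For the forward direction the inclusion $f[\II^n]\tm\II$ is already recorded in the remark following~(\ref{eq-cocon-def}). To obtain a modulus of continuity I would use a double induction: an outer induction on the precision $k$ and, nested inside, an induction on the inductively generated predicate $\out_n(\coco_n)$ (recall $\coco_n=\out_n(\coco_n)$). Concretely, I would prove that for every $k$ and every $g\in\coco_n$ there is a $\delta>0$ with $\diam{g[B]}\le 2^{-k}$ for all boxes $B\tm\II^n$ of diameter $\le\delta$. The base $k=0$ is immediate, since $g\in\coco_n$ forces $g[\II^n]\tm\II_d$, so $\diam{g[\II^n]}\le 1$. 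For the step, fix $f\in\out_n(\coco_n)$ and argue by induction on $\out_n(\coco_n)$, i.e.\ show $\outb_n(\coco_n)(P)\tm P$ for the set $P$ of functions enjoying the level-$(k+1)$ property. At a writing generator~(\ref{eq-outn-w}) we have $\va{d}\comp f\in\coco_n$, so the outer hypothesis gives $\delta$ with $\diam{(\va{d}\comp f)[B]}\le 2^{-k}$; since $f=\av{d}\comp(\va{d}\comp f)$ and $\av{d}$ contracts distances by $1/2$, the same $\delta$ yields $\diam{f[B]}\le 2^{-(k+1)}$. At a reading generator~(\ref{eq-outn-r}) with index $i$, the inner hypothesis supplies $\delta_d$ for each $f\comp\avn{i}{d}$; I would then set $\delta=\min(1/2,\delta_{-1}/2,\delta_0/2,\delta_1/2)$, observe that a box of diameter $\le 1/2$ has its $i$-th projection inside some $\II_d$ (any subinterval of $\II$ of length $\le 1/2$ lies in some $\II_d$), write $B=\avn{i}{d}[B']$ with $\diam{B'}\le 2\delta\le\delta_d$, and conclude $\diam{f[B]}=\diam{(f\comp\avn{i}{d})[B']}\le 2^{-(k+1)}$.

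For the converse I would exhibit a digital system $\FFF$ containing $f$. Let the index set $A$ consist of all pairs $(g,\vec{c})$ with $g\colon\II^n\to\II$ uniformly continuous, $g[\II^n]\tm\II$, $\vec{c}\in\NN^n$, and satisfying the invariant that $\diam{g[B]}\le 1/2$ for every sub-box $B$ obtained by halving coordinate $j$ exactly $c_j$ times (for each $j$); put $f_{(g,\vec{c})}:=g$. Uniform continuity guarantees, for each $g$, a $K(g)$ with $(g,(K(g),\ldots,K(g)))\in A$. Order $A$ by $(g',\vec{c}')<(g,\vec{c})\dequiv\sum_j c'_j<\sum_j c_j$, which is wellfounded. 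To verify the digital-system alternative for $(g,\vec{c})$: if $\vec{c}=\vec{0}$ the invariant gives $\diam{g[\II^n]}\le 1/2$, hence $g[\II^n]\tm\II_d$ for some $d$, so the writing disjunct holds with successor $(\va{d}\comp g,(K(\va{d}\comp g),\ldots))$, noting $\va{d}\comp g=2g-d$ is again uniformly continuous with range in $\II$; if $g[\II^n]$ lies in no $\II_d$ then (a range of diameter $\le 1/2$ would fit) $\vec{c}\neq\vec{0}$, so some $c_i>0$, and reading that coordinate gives $g\comp\avn{i}{d}$ with index $(g\comp\avn{i}{d},\vec{c}-e_i)<(g,\vec{c})$ for all $d$, the reading disjunct; one checks the invariant is inherited by these children. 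Thus $\FFF$ is a digital system and Proposition~\ref{prop-pot} gives $\FFF\tm\coco_n$; since $(f,(K(f),\ldots,K(f)))\in A$ and $f_{(f,\ldots)}=f$, we obtain $\coco_n(f)$.

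The single geometric fact driving both parts is that a reading step refines only one coordinate whereas halving the range demands refining all $n$. In the forward direction this is harmless, costing only the factor $2$ absorbed by $\delta\le\delta_d/2$. In the converse it is the crux and the reason a naive measure such as $\diam{g[\II^n]}$ fails — it need not decrease under a single read — so I route termination through the per-coordinate budget $\vec{c}$, the decisive lemma being that a non-writable index still has budget in some coordinate. The remaining verifications (stability of uniform continuity and of $\cdot\tm\II$ under the affine maps $\avn{i}{d}$ and $\va{d}$, propagation of the invariant, and that writes reset the budget without disturbing $<$) are routine.
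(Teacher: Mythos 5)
Your proposal follows the paper's own proof very closely in both directions: the forward direction by an outer induction on the precision $k$ with a nested induction on the inductively defined $\out_n(\coco_n)$ (writing nodes handled by the outer hypothesis plus the contraction $\av{d}$, reading nodes by taking a minimum of three moduli and halving), and the converse by packaging $f$ into a digital system and invoking Proposition~\ref{prop-pot}. The only substantive divergence is bookkeeping in the converse: the paper indexes its digital system by triples $(f,m,[d_1,\ldots,d_k])$, reads coordinates in a fixed round-robin order, and decrements the modulus $m$ only after all $n$ coordinates have been read (a lexicographic wellfounded order), whereas you carry a per-coordinate budget $\vec c\in\NN^n$ ordered by the total $\sum_j c_j$. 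Both are correct; yours allows reading any coordinate with remaining budget instead of a forced cyclic schedule, though the extracted strategy still has the non-demand-driven character the paper itself criticizes in the remark following its proof.

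There is, however, one step that fails as stated: your base case $k=0$ asserts that $g\in\coco_n$ forces $g[\II^n]\tm\II_d$ for a single digit $d$, hence $\diam{g[\II^n]}\le 1$. This is false in general: the identity on $\II$ lies in $\coco_1$ (its tree begins with a reading node), and its range is all of $\II$, which is contained in no $\II_d$. You were arguably misled by the paper's own loosely worded remark after (\ref{eq-cocon-def}); the claim that actually holds for every $f\in\coco_n$ (by an inner induction on $\out_n(\coco_n)$: writing nodes give range in $\II_d\tm\II$, reading nodes take unions) is only $f[\II^n]\tm\II$. So your base case yields $\diam{g[\II^n]}\le 2$, not $\le 1$, and the induction as formulated does not start. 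The repair is trivial and is in effect what the paper does by phrasing $\bx(\delta,2^{-k},f)$ with balls of \emph{radius} $2^{-k}$: prove instead that boxes of diameter $\le\delta$ have images of diameter $\le 2^{1-k}$. The base is then immediate from $f[\II^n]\tm\II$, your writing and reading steps go through verbatim (the writing step now lands at $2^{-k}=2^{1-(k+1)}$), and uniform continuity still follows since $k$ is arbitrary.
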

\begin{proof}
We have to show that $\coco_n(f)$ holds iff (\ref{eq-uc-def}) holds.

For the ``if'' part we use Prop.~\ref{prop-pot}.
Let $A$ be the set of triples $(f,m,[d_1,\ldots,d_k])$ such that $f$ satisfies
(\ref{eq-uc-def}), $\bx(2^{-m},1/4,f)$ holds, and $d_1,\ldots,d_k\in \SD$ with 
$k<n$ (hence in the case $n=1$ the list $[d_1,\ldots,d_k]$ is always empty).
Define a wellfounded relation $<$ on $A$ by
\[(f',m',[d'_1,\ldots,d'_{k'}]) < (f,m,[d_1,\ldots,d_k]) 
     :\bimp m'<m \lor (m'=m \land k' > k)\]
For $\vec d = [d_1,\ldots,d_k]$, where $k<n$, set 
$\av{\vec d} := \avn{1}{d_1}\comp\ldots\comp\avn{k}{d_k}$, i.\ p.\ 
$\av{[]}$ is the identity function.
We show that $\FFF := (f\comp \av{\vec d})_{(f,m,\vec d)\in A}$ 
is a digital system (this is sufficient, because $f\comp\av{[]}= f$).  

Let $\alpha := (f,m,[d_1,\ldots,d_k]) \in A$.

\emph{Case $m = 0$, i.\ e. $\bx(1,1/4,f)$\/.}
We show that the left disjunct in the definition of a 
digital system holds.
We have $f[\II^n] = f[\ball{1}{\vec 0}]\tm\ball{1/4}{q}$ for some $q\in Q$.
If $|q|\le 1/4$, choose $d:=0$, if $q>1/4$, choose $d:=1$, if $q<-1/4$ choose 
$d:=-1$. Then clearly $f[\II^n]\tm \II_d$, and $g := \va{d}\comp f$ is 
uniformly continuous and maps $\II^n$ into $\II$. 
Hence $(g,m',[])\in A$ for some $m'$.

\emph{Case $m > 0$\/}. We show that the right disjunct in the definition of a 
digital system holds.
Choose $i := k+1$. Let $d\in\SD$.
If $k+1<n$, then $\beta := (f,m,[d_1,\ldots,d_k,d]) < \alpha$ and 
$f\comp \av{[d_1,\ldots,d_k,d]} = 
(f \comp \av{[d_1,\ldots,d_k]})\comp\avn{i}{d}$.
If $k+1=n$, then for $g := f\comp\av{[d_1,\ldots,d_k,d]}$ we have
$\beta := (g,m-1,[]) \in A$ because $\av{[d_1,\ldots,d_k,d]}$ is a contraction with 
contraction factor $1/2$.
Clearly, $\beta < \alpha$ . Furthermore,
$g\comp \av{[]} = g = (f \comp \av{[d_1,\ldots,d_k]})\comp\avn{i}{d}$.

For the ``only if'' part we assume $\coco_n(f)$. Set
\[E_k := \set{f : \II^n \to \RR \mid \ex{\delta} \bx(\delta,2^{-k},f)}\]
For proving
(\ref{eq-uc-def}) it obviously suffices to show $\all{k}(f\in E_k)$. Hence,
it suffices to show $\coco_n\tm E_k$ for all $k$. We proceed
by induction on $k$.

\emph{Base, $k=0$\/}: 
Since $\ball{1}{0} = \II$, we clearly have $\bx(1,2^0,f)$ for all $f\in\coco_n$.

\emph{Step, $k\to k+1$\/}: 
Since $\coco_n=\out_n(\coco_n)$ it suffices to show 
$\out_n(\coco_n)\tm E_{k+1}$.
We prove this by side induction on $\out_n(\coco_n)$, i.\ e.\ we show
$\outb_n(\coco_n)(E_{k+1})\tm E_{k+1}$. 
\emph{Side induction base\/}:
Assume $f[\II^n]\tm\II_d$ and $\coco_n(\va{d}\comp f)$.
By the main induction hypothesis, 
$\bx(\delta,2^{-k},\va{d}\comp f)$ for some $\delta$.
Hence 
%, clearly, 
$\bx(\delta,2^{-(k+1)},f)$.
\emph{Side induction step\/}: 
Assume, as side induction hypothesis,
$\bx(\delta_d,2^{-(k+1)},f\comp\avn{i}{d})$ for all $d\in\SD$. Setting
$\delta = \min\set{\delta_d\mid d\in\SD}$, we clearly have 
$\bx(\delta/2,2^{-(k+1)},f)$.
\end{proof}

\paragraph{Remark.} Prop.~\ref{prop-cont-uc} is mainly of theoretical
value since it shows that the predicate $\coco_n$ does not exclude
any u.\ c.\ functions. From a practical perspective it is 
less useful, since, although the proof of the ``if'' direction 
computes a tree for every u.\ c.\ function $f$, this tree usually does not
represent a very good algorithm for computing $f$ because it follows the
strategy to read \emph{all} inputs if \emph{some} input needs to be read
(because in the proof the number $m$ is decremented only if $k+1=n$, 
i.\ e.\ all inputs have been read).
Hence, for particular
families of u.\ c.\ functions one should \emph{not} use this proof, but rather
design a special digital system that reads inputs only when necessary
(as done in the case of the linear affine functions).

\section{Integration}
\label{sec-integration}
We prove that for functions $f$ in $\coco_1$ the 
%
%definite
%
integral $\intgrl{f} := \dint{-1}{1}{f} = \dintx{-1}{1}{f(x)}$ can
be approximated by rational numbers,
and extract from the proof a program that computes the integral
with any prescribed precision.
For the formal proof we do not need to define what the (Riemann- or Lebesgue-)
integral is; it suffices to know that the following equations hold.
\begin{lem}
\label{lem-int}
\begin{itemize}
\item[(a)] $\intgrl{f} = \frac{1}{2}\intgrl{(\va{d}\comp f)}+d$
%$\intgrl{(\va{d}\comp f)} = 2(\intgrl{f}-d)$.
%
\item[(b)] $\intgrl{f}= \frac{1}{2}(\intgrl{(f\comp\av{-1})} + \intgrl{(f\comp\av{1})})$.
\end{itemize}
\end{lem}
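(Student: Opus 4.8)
The plan is to derive both identities directly from the elementary properties of the definite integral that the paper permits us to assume as ad-hoc axioms, namely linearity, the change-of-variables (substitution) rule, and additivity over adjacent subintervals. Since $\va{d}$ and $\av{d}$ are affine maps, every step reduces to routine bookkeeping, and there is no genuine obstacle beyond keeping track of the limits of integration and the Jacobian factor $\frac{1}{2}$.

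For part (a) I would use linearity only. Unfolding the definition gives $(\va{d}\comp f)(x) = 2f(x)-d$, so that
\[
\intgrl{(\va{d}\comp f)} = \dintx{-1}{1}{(2f(x)-d)} = 2\intgrl{f} - 2d,
\]
using $\dintx{-1}{1}{d} = 2d$. Multiplying by $\frac{1}{2}$ and adding $d$ gives $\frac{1}{2}\intgrl{(\va{d}\comp f)} + d = \intgrl{f}$, which is the claim.

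For part (b) I would apply the substitution rule to each summand. Since $(f\comp\av{-1})(x) = f(\frac{x-1}{2})$, the substitution $u = \av{-1}(x) = \frac{x-1}{2}$ (so that $\mathrm{d}x = 2\,\mathrm{d}u$) carries $x\in\II$ to $u\in[-1,0]$ and yields $\intgrl{(f\comp\av{-1})} = 2\dintd{-1}{0}{f(u)}{u}$; symmetrically $u = \av{1}(x) = \frac{x+1}{2}$ carries $\II$ onto $[0,1]$ and yields $\intgrl{(f\comp\av{1})} = 2\dintd{0}{1}{f(u)}{u}$. Halving the sum and invoking additivity of the integral over $[-1,0]$ and $[0,1]$ then gives $\frac{1}{2}(\intgrl{(f\comp\av{-1})} + \intgrl{(f\comp\av{1})}) = \dintd{-1}{0}{f(u)}{u} + \dintd{0}{1}{f(u)}{u} = \intgrl{f}$.

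The only point worth flagging is geometric rather than analytic: both computations rely on $\av{-1}[\II] = [-1,0]$ and $\av{1}[\II] = [0,1]$, i.e.\ that the halves $\II_{-1}$ and $\II_{1}$ cover $\II$ and overlap only at the single point $0$, which is exactly the decomposition already recorded for the intervals $\II_d$ in Sect.~\ref{sec-ind-coind}. Since both equations are disjunction-free they carry void computational content, so once their truth has been established they may be used freely as axioms in the extraction of the integration program.
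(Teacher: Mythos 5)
Your proposal is correct and takes essentially the same route as the paper: part (a) by linearity of the integral (computing $\intgrl{(\va{d}\comp f)} = 2\intgrl{f}-2d$ and rearranging), and part (b) by the substitution rule for the affine maps $\av{-1},\av{1}$ combined with additivity of the integral over $[-1,0]$ and $[0,1]$. The only difference is the direction of presentation—the paper states the substitution identity $\dint{\av{d}(-1)}{\av{d}(1)}{f} = \frac{1}{2}\dint{-1}{1}{(f\comp\av{d})}$ once and expands $\intgrl{f}$ outward, while you transform each $\intgrl{(f\comp\av{d})}$ inward—which is purely cosmetic.
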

\begin{proof}
(a) $\intgrl{(\va{d}\comp f)}  
= \dintx{-1}{1}{(2f(x)-d)}  
= 2\intgrl{f}-d\dintx{-1}{1}{1}
= 2\intgrl{f}-2d$.

(b) By the substitution rule for integration
$\dint{\av{d}(-1)}{\av{d}(1)}{f} = \frac{1}{2}\dint{-1}{1}{(f\comp\av{d})}$.
Therefore,
$\intgrl{f} 
= \dint{-1}{0}{f} + \dint{0}{1}{f}
= \frac{1}{2}\dint{-1}{1}{(f\comp\av{-1})}+
  \frac{1}{2}\dint{-1}{1}{(f\comp\av{1})}$.
\end{proof}
\begin{prop}
\label{prop-int}
If $\coco_1(f)$, then $\all{k}\ex{p}|\intgrl{f}-p|\le 2^{1-k}$.
\end{prop}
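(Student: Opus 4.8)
The plan is to follow the same two-level induction pattern used in the ``only if'' direction of Proposition~\ref{prop-cont-uc}: induct on $k$ on the outside, and for the step exploit that, by the fixed point property, $\coco_1$ is the \emph{least} fixed point $\out_1(\coco_1)=\lfpt{Y}{\outb_1(\coco_1)(Y)}$, which makes an inner induction on $\out_1(\coco_1)$ available. Concretely I would set
\[E_k := \set{f\colon\II\to\RR \mid \ex{p}|\intgrl{f}-p|\le 2^{1-k}}\]
and prove $\all{k}(\coco_1\tm E_k)$, which is exactly the claim (reordering the two universal quantifiers). For the base case $k=0$ I use that $f\in\coco_1$ implies $f[\II]\tm\II$ (noted after~(\ref{eq-cocon-def})), so that $\intgrl{f}=\dint{-1}{1}{f}\in[-2,2]$ and $p:=0$ witnesses $E_0(f)$.

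For the step $k\to k+1$, since $\coco_1=\out_1(\coco_1)$ it suffices to show $\out_1(\coco_1)\tm E_{k+1}$, and as $\out_1(\coco_1)=\lfpt{Y}{\outb_1(\coco_1)(Y)}$ this follows by the induction principle once I verify $\outb_1(\coco_1)(E_{k+1})\tm E_{k+1}$. This splits into the two disjuncts of $\outb_1$. In the \emph{writing} case I have $f[\II]\tm\II_d$ and $\coco_1(\va{d}\comp f)$; the \emph{outer} hypothesis $\coco_1\tm E_k$ supplies $p'$ with $|\intgrl{(\va{d}\comp f)}-p'|\le 2^{1-k}$, and Lemma~\ref{lem-int}(a) lets me take $p:=\frac{1}{2}p'+d$, halving the error to $\le 2^{-k}=2^{1-(k+1)}$. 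In the \emph{reading} case ($i=1$, so $\avn{1}{d}=\av{d}$) the \emph{inner} hypothesis already gives $E_{k+1}(f\comp\av{d})$ for each $d$; using the $d=\pm1$ instances with Lemma~\ref{lem-int}(b) and $p:=\frac{1}{2}(p_{-1}+p_1)$, the error is bounded by the average $\le\frac{1}{2}(2^{-k}+2^{-k})=2^{1-(k+1)}$.

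The only real subtlety, and the point that dictates the choice of nested inductions, is the precision bookkeeping: progress in $k$ (the halving of the error) happens \emph{only} at writing nodes via Lemma~\ref{lem-int}(a), whereas reading nodes via Lemma~\ref{lem-int}(b) merely preserve the target precision. Hence the outer induction on $k$ is driven by writing steps, while the finitely many reading steps between two writings are absorbed by the inner induction on $\out_1(\coco_1)$ \emph{at the fixed level} $k+1$ — precisely what the least-fixed-point structure of $\out_1$ provides. I therefore expect no genuine obstacle beyond correctly threading the outer and inner hypotheses; note that the rational witness $p$ is assembled compositionally from the affine combinations $\frac{1}{2}p'+d$ and $\frac{1}{2}(p_{-1}+p_1)$, which is exactly the computational content yielding the extracted approximation program.
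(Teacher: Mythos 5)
Your proposal is correct and is essentially the paper's own proof: the same outer induction on $k$ (base case $p=0$ from $f[\II]\tm\II$), the same reduction of the step via $\coco_1=\out_1(\coco_1)$ to a side induction on $\out_1(\coco_1)$, with Lemma~\ref{lem-int}(a) plus the main induction hypothesis handling the writing case and Lemma~\ref{lem-int}(b) plus the side induction hypothesis handling the reading case, with identical witnesses $\frac{p'}{2}+d$ and $\frac{1}{2}(p_{-1}+p_1)$. Your added remark about why the precision only improves at writing nodes, forcing this particular nesting of inductions, is an accurate reading of the structure the paper uses but leaves implicit.
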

\begin{proof}
We show
\[\all{k}\all{f}(\coco_1(f) \imp \ex{p}|\intgrl{f}-p|\le 2^{1-k})\]
by induction on $k$.

$k=0$: Since $\coco_1(f)$ implies $f[\II]\tm\II$ it follows that 
$|\intgrl{f}|\le 2$. Hence we can take $p:=0$.

$k+1$: $\coco_1(f)$ implies 
%
%$f[\II]\tm\II$ and 
%
$\out_1(\coco_1)(f)$.
Hence it suffices to show 
\[\all{f}(\out_1(\coco_1)(f) \imp \ex{p}|\intgrl{f}-p|\le 2^{-k})\]
by a side induction on $\out_1(\coco_1)(f)$.
If $\coco_1(\va{d}\comp f)$, then, by the main induction
hypothesis, $|\intgrl{(\va{d}\comp f)}-p|\le 2^{1-k}$ for some $p$.
By Lemma~\ref{lem-int}~(a) it follows
$|\intgrl{f}-(\frac{p}{2}+d)|
= \frac{1}{2}|\intgrl{(\va{d}\comp f)}-p|
\le 2^{-k}$.
If $\all{d}\ex{p}|\intgrl{(f\comp\av{-1})}-p|\le 2^{-k}$, then 
in particular there are $p$ and $q$ such that
$|\intgrl{(f\comp\av{-1})}-p|\le 2^{-k}$ and 
$|\intgrl{(f\comp\av{1})}-q|\le 2^{-k}$.
By Lemma~\ref{lem-int}~(b) it follows
$|\intgrl{f}-\frac{1}{2}(p+q)|
= \frac{1}{2}|\intgrl{(f\comp\av{-1})} + \intgrl{(f\comp\av{1})}-(p+q)|
\le \frac{1}{2}(|\intgrl{(f\comp\av{-1})}-p| + 
                |\intgrl{(f\comp\av{1})}-q|)
\le 2^{-k}$.
\end{proof}
When extracting a program from the proof of Proposition~\ref{prop-int}
we may treat the equations of Lemma~\ref{lem-int} as axioms.
The proof of Lemma~\ref{lem-int} is completely irrelevant for
the extracted program and was given only to convince us of the 
truth of the equations.
Here is the program extracted from the proof of Proposition~\ref{prop-int}:
\begin{code}
integral :: C1 -> Nat -> Rational
integral c n = aux n c  where
  
  aux Zero c = 0
  aux (Succ n) (ConsC1 x) = itJ1 step x  where

   step :: K1 C1 Rational -> Rational
   step (W1 d c')      = aux n c'/2 + fromSD d
   step (R1 (eN,_,eP)) = (eN + eP)/2
\end{code}
We can try it out by evaluating, for example,
\verb|integral (lmaC1 1.5) (iN 10)|.

An interesting aspect of our integration program is the fact that it ``adapts''
automatically to the shape of the function. For example, if
we integrate a smoother function, e.g.\ by changing above the index $a=1.5$
to, say, $0.1$, then we can increase the precision from $2^{-10}$ to
$2^{-20}$ and observe about the same computation time.
% %
% \begin{code}
% integral (lmaC1 0.1) (iN 20)
% \end{code}
% %
\paragraph{Remark.} In \cite{Simpson98} an algorithm for exact integration
is given which is based on the equations of Lemma~\ref{lem-int} as well
and which uses ideas from \cite{Berger93} on a sequential implementation
of the ``fan functional'', 
but where the function to be integrated is given as a continuous
function on signed digit streams. Unsurprisingly, our integration program
is simpler and more efficient because in our case the integrand is given
as a tree containing explicit information about the modulus of uniform 
continuity. In general, of course, our program is still exponential
in the precision which is in accordance with general results on
the exponential nature of integration \cite{Ko91}. 

\section{Conclusion and further work}
\label{sec-conclusion}
We presented a method for extracting from coinductive proofs 
tree-like data structures coding exact lazy algorithms for
real functions. The extraction method is based on a variant
of modified realizability that strictly separates the (abstract)
mathematical model the proof is about from the data types the
extracted program is dealing with. The latter are determined
solely by the propositional structure of formulas and proofs.
This has the advantage that the abstract mathematical structures
do not need to be `constructivized'. 
In addition, formulas not containing disjunctions
are computationally meaningless and can therefore be taken as axioms
as long as they are true. This enormously reduces
the burden of formalization and turns - in our opinion - program
extraction into a realistic method for the development of nontrivial
certified algorithms. In particular, the very short proof and extracted
program for the definite integral demonstrates that our method does not
become unwieldy when applied to less trivial problems.

%\paragraph{Further work.} 
%
Up to and including Sect.~\ref{sec-realizability} the proof formalization 
and program extraction has been carried out in the Coq proof assistant.
The formalization in Coq of proofs involving nested inductive/coinductive 
predicates such as $\coco_n$ causes problems because Coq's guardedness checker
does not recognize such proofs as correct. In order to circumvent these problems
we are currently adapting the existing implementation
of program extraction in the Minlog proof 
system~\cite{BenlBergerSchwichtenbergSeisenbergerZuber98}
to our setting. 
However, we would like to stress that program extraction 
from proofs has turned out to be a very reliable and useful
methodology for obtaining certified programs,
even if the extraction is done with pen and paper and not supported
by a proof assistant.
We also plan to extend this work to more general
situations where the interval $\II$ and the maps $\av{d}$ are replaced
by an arbitrary bounded metric space with a system of contractions
(see \cite{Scriven08} for related work), or even to the non-metric
case (for example higher types). 
These extensions will facilitate the extraction of efficient programs
for e.g.\ analytic functions, 
parametrised integrals,
and set-valued functions.

Although our extracted programs perform reasonably well, we do not
claim to be able to compete with existing specialized software for
exact real number computation (e.g. \cite{Mueller01,Lambov07}) regarding
efficiency.  Our aim is rather to provide a practical methodology for
producing correct and verified software and combining existing fully
specified correct (and trusted) software components. For example,
existing efficient exact implementations of certain real functions
could be formally represented in our logical system as constants which
are axiomatized by their given specification and realized by the
existing implementation.
In future work we plan to apply program extraction also to other areas, 
for example, monadic parsing.

\section*{Acknowledgements}
I would like to thank the anonymous referees for 
their constructive criticism and valuable suggestions
that led to several improvements of the paper. 
%that prompted me to substantially revise and extend this paper.

%\bibliographystyle{splncs} \bibliography{../database}

\end{document}